\newcommand{\qeq}{\accentset{(a)}{=}}
\newcommand{\qeqb}{\accentset{(b)}{=}}
\DeclareMathOperator*{\argmax}{arg\,max}
\newcommand{\diag}{\mathop{\rm diag}}
\newtheorem{thm}{Theorem}
\newtheorem{conj}{Conjecture}
\newtheorem{cor}{Corollary}
\newtheorem{propos}{Proposition}
\newtheorem{mydef}{Definition}
\begin{document}
\title{Handoff Rate and Coverage Analysis in Multi-tier Heterogeneous Networks}
\author{Sanam Sadr and Raviraj S. Adve\\
University of Toronto \\
Dept. of Elec. and Comp. Eng.,  \\ 10 King's College Road, Toronto,
Ontario, Canada M5S 3G4\\
\{ssadr, rsadve\}@comm.utoronto.ca }
\date{}
\maketitle
\vspace*{-0.3in}
\begin{abstract}
This paper analyzes the impact of user mobility in multi-tier heterogeneous networks. We begin by obtaining the handoff rate for a mobile user in an irregular cellular network with the access point locations modeled as a homogeneous Poisson point process. The received signal-to-interference-ratio (SIR) distribution along with a chosen SIR threshold is then used to obtain the probability of coverage. To capture potential connection failures due to mobility, we assume that a fraction of handoffs result in such failures. Considering a multi-tier network with orthogonal spectrum allocation among tiers and the maximum biased average received power as the tier association metric, we derive the probability of coverage for two cases: 1) the user is stationary (i.e., handoffs do not occur, or the system is not sensitive to handoffs); 2) the user is mobile, and the system is sensitive to handoffs. We derive the optimal bias factors to maximize the coverage. We show that when the user is mobile, and the network is sensitive to handoffs, both the optimum tier association and the probability of coverage depend on the user's speed; a speed-dependent bias factor can then adjust the tier association to effectively improve the coverage, and hence system performance, in a fully-loaded network.\end{abstract}

\begin{IEEEkeywords}
Heterogeneous cellular networks, Poisson point processes, range expansion, user association, handoff rate, downlink coverage analysis.
\end{IEEEkeywords}

\section{Introduction}
It is now widely accepted that the next generation of wireless communication systems will be based on multi-tier heterogeneous wireless access with an all-IP infrastructure~\cite{akyildiz:04,fabio:02}. With the anticipated increase in the number of applications available to a handheld device, e.g., voice, data, real-time multimedia,~etc.~\cite{kilper:11, cisco12:17}, mobility management will play an important role in providing seamless service to the mobile users moving from one access point (AP) to the other. Client-server applications such as email, web browsing, etc., are amenable to short-lived connections and do not require sophisticated mobility solutions. Media streams, however, can function normally only with a maximum interruption of 50msec; while an interruption of 200msec is still acceptable, any longer interruption causes perceptible and unacceptable delays~\cite{banerjee:03}. Therefore, on the one hand, it is desirable to minimize handoffs between APs to avoid any excessive connection delay and call drops; on the other hand, depending on the radio resource allocation scheme, a user entering the service area of another AP could cause (and suffer) increased interference levels if the handoff is not performed. Therefore, in the move towards multi-tier heterogeneous networks, the issues of cell association and handoff must be addressed in an efficient manner to minimize the service delay and connection failure for mobile users.

\subsection{Related Work}
In the context of cellular networks, there is a large body of literature studying the delays caused due to handoffs~\cite{banerjee:03,nakajima:03,Lin:07}, protocols and effective handoff algorithms~\cite{kim:96, iera:02, akyildiz:04, ekiz:05}, and multi-tier system design with microcells overlayed by macrocells~\cite{ganz:97, lagrange:97}. If there are enough resources, the classic handoff algorithms in a multi-tier network assign users to the lowest tier (e.g., the microcells) thereby increasing system capacity~\cite{iera:02}. To account for mobility, based on an estimated sojourn time compared to a threshold, the user is classified as slow or fast, and is assigned to the lower or the upper tier respectively~\cite{Beraldi:96, Zeghlacke:96} (these works assume a two-tier network). The estimated sojourn time depends on the cell dimensions as well as user information such as the point of entry and user trajectory~\cite{yeung:96}. Similarly, velocity adaptive algorithms use the mobility vector, including both the estimated velocity and the direction, to perform the handoff~\cite{kim:96}. To avoid the ping-pong effect due to unnecessary inter-tier handoffs, once the user is classified as fast, it remains connected to the upper tier regardless of any changes in its speed~\cite{lagrange:96}. Another alternative is to introduce a dwell-time threshold to take into account the history of the user before any handoff decision~\cite{iera:02}. This technique is based on speed estimation at each cell border. 

Whether the handoff is performed solely by the network controller~\cite{3gpp:12}, or autonomous decisions by the user equipment are taken into account~\cite{pedersen:13}, it is desirable to reduce the signalling overhead due to unnecessary or frequent handoffs between the tiers or among the APs within one tier. The proposed algorithms mentioned above are mainly applicable in large cells. Importantly, the handoff rate, sojourn time or dwell time analysis provided in the literature consider deterministic AP locations and a regular grid for the positions of the base stations. With the increasing deployment of multi-tier networks, especially small cells in an irregular, non-deterministic manner~\cite{ghosh:12}, handoff analyses for modern heterogeneous networks must now take into account the randomness of the AP locations by using random spatial models~\cite{andrewsSpatial:10}.

One of the common 2-D spatial models, with the advantage of analytical tractability, is the homogeneous Poisson point process (PPP) characterized by the density of nodes, $\lambda$. In this model, the number of nodes in area $A$ is a Poisson random variable with mean $\lambda A$; the number of nodes in disjoint regions are independent random variables, and node locations are mutually independent. The accuracy of this model for a two-tier cellular network was examined by Dhillon \textit{et al}.~\cite{dhillonJournal:12}; it was shown that the probability of coverage in a real-world 4G network lies between that predicted by the PPP model (pessimistic lower bound) and that by the regular hexagonal grid model (optimistic upper bound) with the same AP density. Similar results were reported comparing the coverage predictions by the PPP and the square grid model~\cite{andrews:11}. Furthermore, the PPP model provides a tighter bound at cell edges where the probability of having a dominant interferer is closer to that found in an actual 4G network~\cite{andrews:11}. Wen \textit{et al.}~\cite{Wen:14} extended the analysis in~\cite{dhillonJournal:12}, and derived the probability of coverage and the average user throughput in a multi-hop multi-tier network. In both works, the maximum received signal-to-interference-plus-noise-ratio (SINR) at the user is used as the tier connection metric. The user is then in coverage if its received SINR is above the tier's threshold either through direct connection to an AP (infrastructure mode) or through another user (ad hoc mode)~\cite{Wen:14}. 

Modeling each tier of a multi-tier network by an independent PPP, Jo \textit{et al.}~\cite{shin:12} derived the outage probability and the ergodic rate of a multi-tier network with flexible tier association. A popular method to achieve this flexibility is by adding, in dB, a bias factor either to the average received power~\cite{shin:12,Mukherjee:ICC12} or to the instantaneous received SINR~\cite{Mukherjee:ICC12}, and use this biased received power or biased SINR as the tier connection metric. If all the APs in the same tier have the same transmit power (an assumption common to most works), connecting to the AP with the maximum average received power results in the 2-D space tessellation in which the AP coverage area is represented by a Voronoi cell. The cell associated with an AP then comprises those points of space that are closest to the AP; in turn, the cell size is a continuous random variable. An analytical approximation for the Voronoi cell size has been derived in~\cite{Ferenc:07}. Applying this formula to a network with the user distribution modelled by an independent PPP, the distribution of the AP load in terms of the number of users is derived in a single tier~\cite{Yu:13}, and a multi-tier network with flexible tier association~\cite{Singh:13}. The authors in~\cite{Yu:13} obtained the probability that an AP is inactive and used it to derive the probability of SINR coverage taking into account only the received signals from the active APs. Using the proportionally fair model where each AP equally divides its available bandwidth amongst its users, the load distribution was used to derive the downlink rate distribution at a reference user in a multi-tier network~\cite{Singh:13}, and a two-tier network with limited backhaul capacity~\cite{Singh:14}. 

The effect of the dominant interferer was studied by Heath \textit{et al.}~\cite{Heath:13} to derive a tractable model for the total interference at a specific cell in a heterogeneous network. In this model, choosing a fixed-size cell (as the cell under consideration) and a guard radius (hence a guard region), the interfering APs form a PPP and consist of all APs lying outside the guard region with the nearest one as the dominant interferer. This work allows us to evaluate the performance for a ``given" cell as apposed to a ``typical" cell in the entire network. Clearly, in a multi-tier network, load distribution through tier association affects the number of users serviced by each tier. To deal with the resulting inter-tier and intra-tier interference, two main approaches have been proposed: 1) spectrum sharing among tiers but with fractional frequency reuse~\cite{novlan:J12}; 2) orthogonal spectrum allocation among tiers, thereby eliminating inter-tier interference, e.g.,\cite{bao:J14, SadrAdveCL:14}. The relation between the allocated fraction of spectrum and the tier association probability to maximize the probability that a typical user receives its required rate has been studied in~\cite{SadrAdveCL:14}. It is shown that in the case where an average number of users per AP (only controlled by the relative distribution densities and the bias) is assumed, the intuitive solution that the fraction of spectrum allocated to each tier should be equal to the tier association probability results in essentially zero performance loss.

In all the works mentioned above in the context of heterogeneous networks modelled by PPP, the performance metric is evaluated at a stationary (but randomly located) user. In this paper, we focus on \textit{mobility-aware} analyses. Since interference management and tier association are coupled, handoff analysis becomes important in deriving effective mobility-aware solutions that can improve not only the service delay but the overall system capacity through load distribution in the network. A comprehensive survey of mobility models and their characteristics can be found in~\cite{Tracy:02, Roy:11} and the references therein. They are generally categorized into trace-based and synthetic models, and are mainly used to simulate the movement of the mobile users in ad hoc networks as realistically as possible. 

To the best of our knowledge, the only work that applied a mobility model in a PPP network, and derived an analytical expression for the handoff rate and sojourn time is by Lin \textit{et al.}~\cite{Lin:13}. The authors proposed a modified random waypoint (RWP) model\footnote{The RWP model~\cite{Johnson:96} is one of the most commonly used mobility models for evaluating the performance of a protocol in ad hoc networks. In this model, each node picks a random destination uniformly distributed within an underlying physical space, and travels with a speed uniformly chosen from an interval. Upon reaching the destination, the process repeats itself (possibly after a random pause time).} in a single-tier irregular network with APs modelled by a PPP, and show that the handover rate is proportional to the square root of the AP density. This work defines the handoff rate as the ratio of the average number of cells a mobile user traverses to the average transition time (including the pause time). Their analysis predicts a slightly higher handover rate and lower sojourn time (overall, a pessimistic prediction) compared to an actual 4G network. The handover rate and sojourn time predictions in this work, along with the coverage predictions in~\cite{dhillonJournal:12,andrews:11}, imply that the PPP model provides a slightly pessimistic but sufficiently accurate analysis while being analytically tractable. 

\subsection{Our Contributions}
Attracted by its applicability and tractability, we use the PPP model for handoff analysis in an irregular multi-tier network. Our goal is to analyze the impact of mobility, and to use this analysis in deriving effective tier association rules and, hence, load distribution in a multi-tier network to minimize this (negative) impact. Similar to~\cite{Lin:13}, we consider the handoff rate during one movement period. However, we use a different mobility model (as apposed to the modified RWP) and a different definition for handoff rate that includes the connection metric. Specifically, we define the handoff rate as the \textit{probability} that the user crosses over to the next cell in one movement period. The contributions of our work are as follows:
\begin{itemize}
\item First, we derive the handoff rate in a network where the AP locations are modeled by a homogeneous PPP. This handoff rate can also be interpreted as the probability that the serving AP does \textit{not} remain the best candidate in one movement period. Based on some mild approximations, we simplify this expression; our numerical simulations show that our theoretical expression provides reliable results over a broad range of system parameters. 

\item Next, we consider coverage in a network with mobile users, assuming that a certain fraction of handoffs result in connection failure, i.e., we assume that the outage probability is linearly related to the handoff rate derived earlier. We use the signal-to-interference-ratio (SIR) in an interference-limited network and a pre-specified SIR threshold to define coverage at a reference user. We derive the probability that a mobile user, initially in coverage, remains so despite its motion. Interestingly, the results show that the degradation in service - even for a fast moving user in a network where most handoffs result in outage - decreases with the increase in the SIR threshold (provided that the user is allocated its desired resources). This approach provides a tractable model to analyze the impact of mobility; specifically, we \emph{do not} attempt to derive a joint coverage probability distribution across the locations of a specific mobile user. 

\item Finally, we consider a multi-tier network with orthogonal spectrum allocation among the tiers, and derive the coverage probability considering handoffs. We show that the overall network coverage can be improved by adjusting the tier association through the corresponding bias factor in a mobility-aware manner, hence, improving system performance in a fully-loaded network. This multi-tier coverage expression with handoff can easily be generalized to include spectrum sharing across tiers using~\cite{shin:12}. 
\end{itemize}

The remainder of the paper is organized as follows. In Section~\ref{system}, we describe the PPP model, the multi-tier network under consideration, and the tier association metric. The handoff rate for a typical user in such a network is derived in Section~\ref{sec:hoffrate}. This is the main result of the paper upon which the subsequent sections rely. The probability of coverage in a single tier considering a linear cost function for the handoff is derived in Section~\ref{coverage1}. Section~\ref{coverage2} presents the probability of coverage in a multi-tier network with and without the handoff analysis. We show how the bias factor can optimize the tier association for a mobile user to maximize the probability of coverage in a multi-tier network. We conclude the paper in Section~\ref{conclusions}.

Throughout the paper, we use $\mathds{P}(Y)$ to denote the probability of event $Y$ and $\mathds{E}_{X}(\cdot)$ to denote the statistical average with respect to the random variable $X$.

\section{System Model} \label{system}
We consider a network comprising $K$ tiers in the downlink. Each tier, indexed by $k$, is characterized as a homogeneous Poisson point process $\Phi_{k}$ with the tuple $\{P_{k},\lambda_{k}, \tau_{k}\}$ denoting the transmit power, the AP density and the SIR threshold respectively. The tiers are organized in increasing order of density i.e., $\lambda_{1} \leq \lambda_{2} \cdots \leq \lambda_{K}$. Given the density $\lambda_{k}$, the number of APs belonging to tier $k$ in area $A$ is a Poisson random variable, with mean $A\lambda_{k}$, which is independent of other tiers. Furthermore, all the APs in tier $k$ have the same transmit power $P_{k}$.

In our model, transmissions suffer from a constant path loss exponent $\alpha$. Rayleigh fading with unit average power models the channel gain for the received signal from each AP. Since networks such as those under consideration here are interference-limited, we ignore thermal noise, and use the received SIR as the coverage metric. Finally, for tractability, we ignore shadowing. In the coverage analysis, we assume orthogonal spectrum allocation among tiers and a reuse factor of one within each tier. Therefore, at a typical user connected to tier $k$, the set of interfering APs include all the APs in tier $k$ except the serving AP\footnote{A reuse factor of greater than one can be accounted for by using a reduced AP density in calculating the interference~\cite{andrews:11}, but will not change the handoff rate.}. The results derived in this paper can be generalized to allow for spectrum sharing across tiers~\cite{shin:12}, log-normal shadowing~\cite{Dhillon:14}, and any arbitrary fading distribution for the interfering signals~\cite{andrews:11}.

To evaluate the received signal at a typical user, we shift all the point processes so that the user under consideration lies at the origin. Despite the shift, each tier still forms a PPP with the same density. We use the maximum ``biased" average received power as the tier association metric where the received power from all the APs of different tiers are multiplied by the corresponding bias factor $B_{k}$, and the user is associated with the tier with the maximum biased average received power. Later in the paper, we will optimize the bias factor to account for mobility. Let $r_{j}$ denote the distance between a typical user and the nearest AP in the $j$th tier. In this setup, the user connects to tier $k$ if:
\begin{equation}
\label{eq:LayerAsso}
k = \argmax_{j \in \{1,\cdots,K\}} P_{j}L_{0}(r_{j}/r_{0})^{-\alpha}B_{j},
\end{equation}
where $B_{j}$ is the bias factor associated with tier $j$, $L_{0}$ is the path loss at reference distance $r_{0}$, and $\alpha$ is the path loss exponent. We use $r_{0} = 1$m and $L_{0} = (4\pi/\epsilon)^{-2}$ where $\epsilon$ denotes the wavelength at 2GHz. Since all the APs in each tier have the same transmit power and bias factor, the best candidate from each tier is the AP closest to the user. Without loss of generality, we set $B_{1} = 1$. If all the tiers have the same bias factor (or simply, $B_{j} = 1, \; \forall j$), the tier association metric is the maximum received power. Otherwise, $B_{j} > 1$ results in an increased number of users connecting to tier $j$.

Within the serving tier, the user connects to the nearest AP in that tier. If the user is initially in coverage, when it moves, it might fall into the coverage area of another AP at a shorter distance, and a handoff occurs. Although the user might be in coverage at both locations, rapid changes in the serving AP increases the possibility of connection failure. Fig.~\ref{fig:MobMod} shows the scenario under consideration. $l_{1}$ is the user's initial location at connection distance $r$ from the serving AP denoted by $AP_{s}$. The user moves a distance $v$ in a unit of time, at angle $\theta$ with respect to the direction of the connection, to a new location $l_{2}$ at distance $R$ from $AP_{s}$. This model is mostly suitable for a scenario where the user moves at a constant speed or it has small variation such that it can be approximated by its mean.

\section{Handoff Rate in a Single Tier Network}
\label{sec:hoffrate}

Whether the handoff occurs (Fig.~\ref{fig:MobMod2}) or not (Fig.~\ref{fig:MobMod1}) depends on the existence of another AP in the circle with the user at the center and radius $R$.
\begin{figure}
     \begin{center}
        \subfigure[Scenario where the handoff occurs.]{%
            \label{fig:MobMod2}
            \includegraphics[width=0.5\textwidth]{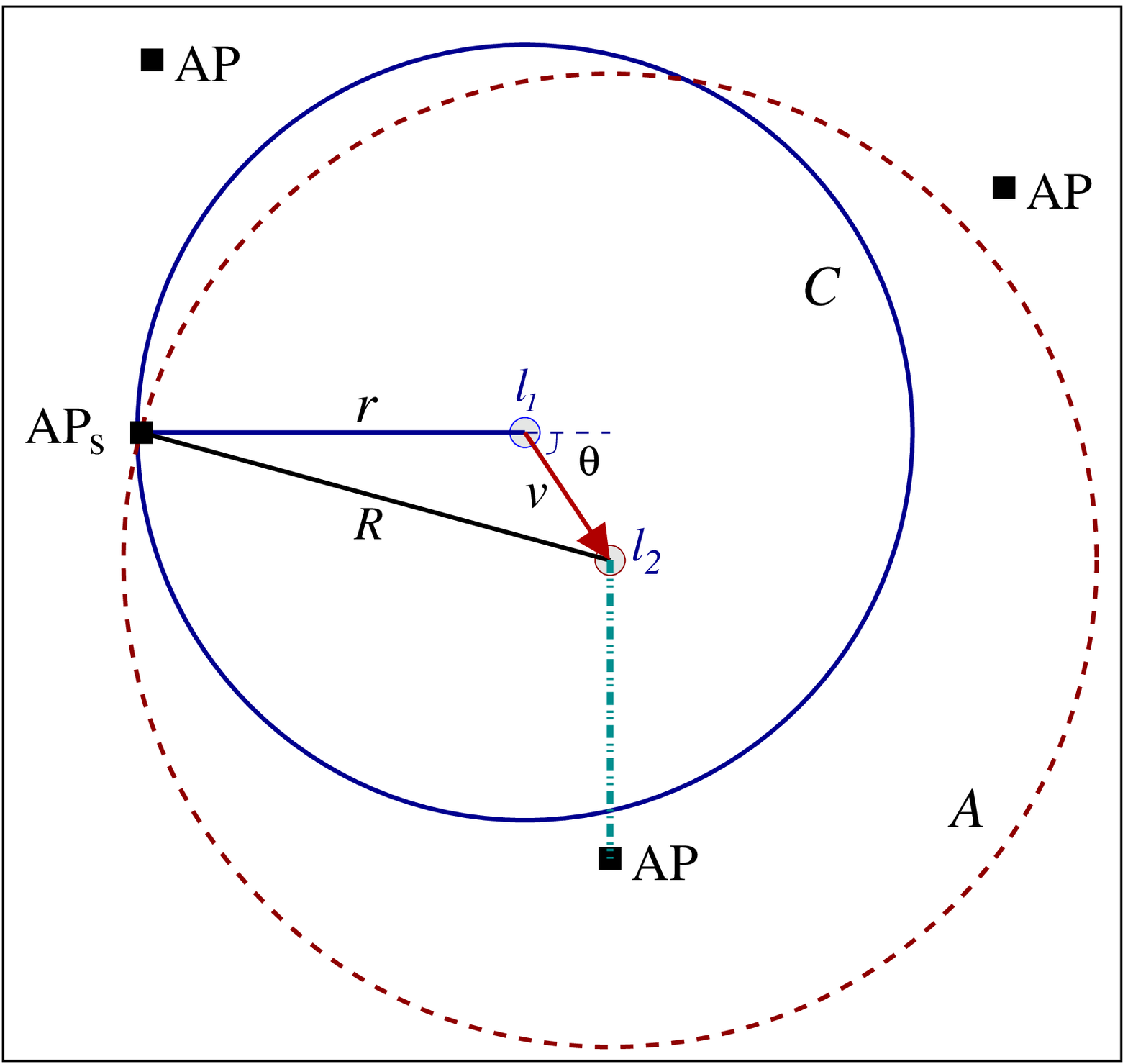}
        }%
        \subfigure[Scenario where the handoff does not occur.]{%
           \label{fig:MobMod1}
           \includegraphics[width=0.5\textwidth]{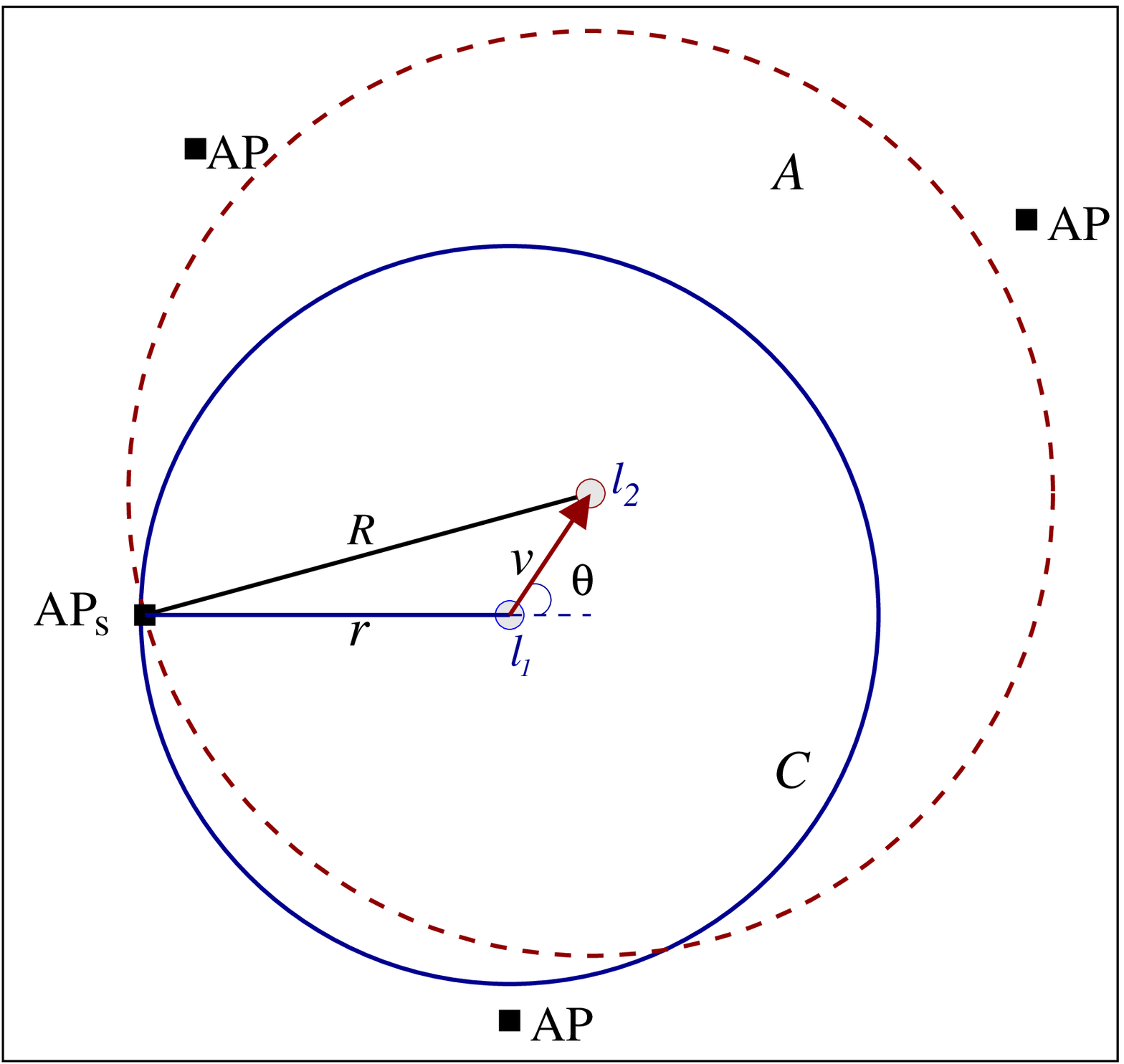}
        } 
        \end{center}
    \caption{Scenario where the user is initially at $l_{1}$, at connection distance $r$ from the serving AP, moving a distance $v$ in the unit of time at angle $\theta$ with the direction of the connection; (a) handoff occurs if there is another AP closer than $R$ to the user at the new location $l_{2}$; (b) the serving AP remains the closest AP to the user at location $l_{2}$. Hence, handoff does not occur.}
   \label{fig:MobMod}
\end{figure}
In Fig.~\ref{fig:MobMod}, $\mathcal{C}$ denotes the circle with its center at $l_{1}$ and radius $r$; $\mathcal{A}$ denotes the circle with its center at $l_{2}$ and radius $R$. The two circles intersect in at least one point which is $AP_{s}$. The excess area swiped by the user moving from $l_{1}$ to $l_{2}$ is denoted by $\mathcal{A} \setminus \mathcal{A} \cap \mathcal{C}$. For the most part, we assume that the user can move in any direction with equal probability. Later in the paper, we show that due to symmetry, the probability of handoff for the user moving at angle $(2\pi - \theta)$ is the same as that for the user moving at angle $\theta$ with the direction of the connection. Denoting the corresponding random variable as $\Theta$, the probability distribution function (PDF) of $\Theta$ is then set to be non-zero in $[0,\pi)$; here, we assume a uniform distribution given by $f_{\Theta}(\theta) = 1/\pi$.

Let $H_{k}$ denote the event that a handoff occurs for a user connected to tier $k$. Throughout the paper, we denote the complementary event that a handoff does not occur for the user connected to tier $k$ as $\widebar{H}_{k}$. Furthermore, let random variable $R_{k}$ denote the distance between the user and the closest AP in tier $k$. 

\begin{mydef}
The handoff rate, $\mathcal{H}_{k}(v, \lambda_{k}) = \mathds{P}(H_{k})$, denotes the probability of handoff for a user connected to an AP belonging to tier $k$, moving a distance $v$ in a unit of time (speed of $v$).
    \end{mydef}
The handoff rate, $\mathcal{H}_{k}(v, \lambda_{k})$, is a function of $v$ and $\lambda_{k}$ and is given by Theorem \ref{thrm:handoffcondition}.
\begin{thm} 
\label{thrm:handoffcondition}
Consider a mobile user at connection distance $r$ in a network with access points distributed according to a homogeneous PPP with density $\lambda_{k}$. The probability of handoff $\mathds{P}(H_{k}|r, \theta)$ for the user moving a distance $v$ in a unit of time at angle $\theta$ with respect to the direction of the connection is given by:
\begin{equation}
\label{eq:handoffcondition}
\mathds{P}(H_{k} | r, \theta) = 1- \exp \Bigg[ -\lambda_{k} \Bigg(R^{2} \left[ \pi - \theta + \sin^{-1}\left(\frac{v\sin\theta}{R}\right) \right] - r^{2} (\pi - \theta) + rv\sin \theta \Bigg) \Bigg],
\end{equation}
where $R = \sqrt{r^2 + v^2 + 2rv\cos \theta}$. Furthermore, in such a network, the handoff rate for a uniformly distributed $\theta$ is given by:
\begin{multline}
\mathcal{H}_{k}(v, \lambda_{k}) = \\
1 - \frac{1}{\pi} \int_{\theta = 0}^{\pi} \int_{r=0}^{\infty} 2 \pi \lambda_{k} r \exp \Bigg[ -\lambda_{k} \Bigg(R^{2} \left[ \pi - \theta + \sin^{-1}\left(\frac{v\sin \theta}{R}\right) \right] + r^{2}\theta + rv\sin \theta \Bigg) \Bigg] \mathrm{d}r \mathrm{d}\theta.
\end{multline}
\end{thm}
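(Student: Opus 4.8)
The plan is to reduce the event $H_k$ to the presence of a single point of $\Phi_k$ in a well-chosen region, compute that region's area by elementary geometry, and then average over the connection distance and the angle.

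First I would fix $r$ and $\theta$ and argue that a handoff occurs if and only if $\Phi_k$ places at least one point, other than $AP_s$, strictly inside the disk $\mathcal{A}$ of radius $R$ centred at $l_2$: the user at $l_2$ reassociates precisely when some AP is closer than the incumbent, whose distance is $R$. The decisive observation is that conditioning on the connection distance being $r$ means the open disk $\mathcal{C}$ of radius $r$ centred at $l_1$ is empty of points of $\Phi_k$ (the serving AP sits on its boundary). Hence every handoff-inducing point must lie in the excess region $\mathcal{A}\setminus\mathcal{C}$, which is disjoint from the interior of $\mathcal{C}$. By the independence of a PPP on disjoint sets, the number of points of $\Phi_k$ in $\mathcal{A}\setminus\mathcal{C}$ is Poisson with mean $\lambda_k\,|\mathcal{A}\setminus\mathcal{C}|$ and is unaffected by the emptiness of $\mathcal{C}$; the void probability then gives $\mathds{P}(H_k\mid r,\theta)=1-\exp(-\lambda_k\,|\mathcal{A}\setminus\mathcal{C}|)$.

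The next step is to evaluate $|\mathcal{A}\setminus\mathcal{C}|=\pi R^2-|\mathcal{A}\cap\mathcal{C}|$ using the standard lens formula for the intersection of two disks of radii $r$ and $R$ whose centres $l_1,l_2$ lie a distance $d=v$ apart. The geometry collapses pleasantly because $AP_s$ is a common boundary point of the two circles, so the triangle $l_1 l_2 AP_s$ has side lengths $r,v,R$ with interior angle $\pi-\theta$ at $l_1$ (the direction of motion makes angle $\theta$ with the connection direction). The law of cosines turns the two $\cos^{-1}$ terms of the lens formula into $\pi-\theta$ and $\cos^{-1}\!\big((v+r\cos\theta)/R\big)$; the Heron radical equals $2rv\sin\theta$ (four times the triangle's area $\tfrac12 rv\sin\theta$), so with the $-\tfrac12$ prefactor it contributes $-rv\sin\theta$ to the lens and hence $+rv\sin\theta$ to the excess; and the law of sines rewrites the second angle as $\theta-\sin^{-1}(v\sin\theta/R)$. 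Substituting these into $\pi R^2-|\mathcal{A}\cap\mathcal{C}|$ reproduces the area in the exponent of \eqref{eq:handoffcondition}. Finally I would average $\mathds{P}(H_k\mid r,\theta)$ against the nearest-neighbour law of a planar PPP, $f_{R_k}(r)=2\pi\lambda_k r\,e^{-\pi\lambda_k r^2}$, and the uniform law $f_\Theta(\theta)=1/\pi$ on $[0,\pi)$. The constant term integrates to $1$; multiplying the surviving exponential by $e^{-\pi\lambda_k r^2}$ merges the exponents, and the $\pi r^2$ from the density cancels against $-r^2(\pi-\theta)$ to leave $r^2\theta$, which is exactly the stated double integral for $\mathcal{H}_k(v,\lambda_k)$.

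The main obstacle is not the probabilistic reduction but the trigonometric bookkeeping in the lens step, in particular justifying the branch identity $\cos^{-1}\!\big((v+r\cos\theta)/R\big)=\theta-\sin^{-1}(v\sin\theta/R)$ uniformly in $\theta\in[0,\pi)$. I would handle this through the angle-sum of the triangle $l_1 l_2 AP_s$ together with the law of sines, so that $\sin^{-1}(v\sin\theta/R)$ is literally the interior angle of that triangle at $AP_s$, rather than by manipulating inverse-trig identities directly; I would then verify that this angle stays in $[0,\pi/2]$ over the relevant range (which holds whenever the angle at $AP_s$ is acute, e.g.\ for $v\le r$) so that the principal branch of $\sin^{-1}$ is the correct one.
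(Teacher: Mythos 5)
Your proposal is correct and follows essentially the same route as the paper's Appendix A: reduce the handoff event to the void probability of the excess lens region $\mathcal{A}\setminus(\mathcal{A}\cap\mathcal{C})$, evaluate that area with the two-disk intersection formula via the law of cosines/sines in the triangle $l_1 l_2 AP_s$, and average against $f_{R_k}(r)=2\pi\lambda_k r e^{-\pi\lambda_k r^2}$ and $f_\Theta(\theta)=1/\pi$. Your two refinements --- justifying the conditioning via PPP independence on disjoint sets rather than the paper's informal conditional-probability step, and flagging that the principal branch of $\sin^{-1}(v\sin\theta/R)$ is only valid when the angle at $AP_s$ is acute (e.g.\ $v\le r$) --- are both legitimate points the paper glosses over, but they do not change the argument.
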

\begin{proof}
See Appendix \ref{AppendixA}.
\end{proof}

For the special case $\theta = 0$ when the user is moving radially away from the serving AP, the handoff rate can be written in closed form and is given by the following corollary.
\begin{cor}
The handoff rate for the user moving radially away from the serving AP, i.e., $f_{\Theta}(\theta) = \delta(\theta)$ where $\delta(\cdot)$ is the Dirac delta function, is given by:
\begin{equation}
\mathcal{H}_{k}(v, \lambda_{k}) = 1 - \left( e^{-\lambda_{k} v^{2}\pi} - 2 v\pi\sqrt{\lambda_{k}} \cdot Q(v\sqrt{2 \pi \lambda_{k}}) \right),
\label{eq:specialcase}
\end{equation}
where $Q(x) = \frac{1}{\sqrt{2\pi}} \int_{x}^{\infty} e^{-t^{2}/2}\mathrm{d}t$.
\end{cor}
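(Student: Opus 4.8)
The plan is to specialize the integral expression for $\mathcal{H}_k(v,\lambda_k)$ in Theorem~\ref{thrm:handoffcondition} to the degenerate angular law $f_\Theta(\theta)=\delta(\theta)$ and then evaluate the remaining one-dimensional integral over the connection distance in closed form. Setting $\theta=0$ is the decisive simplification: the chord length collapses to $R=\sqrt{r^2+v^2+2rv}=r+v$, the factor $\sin\theta$ vanishes so that both $\sin^{-1}(v\sin\theta/R)$ and the term $rv\sin\theta$ disappear, and the term $r^2\theta$ also vanishes. Replacing the uniform average $\tfrac{1}{\pi}\int_0^\pi(\cdot)\,\mathrm{d}\theta$ by the sifting action of the Dirac delta leaves
\[
\mathcal{H}_k(v,\lambda_k) = 1 - 2\pi\lambda_k \int_{0}^{\infty} r\, e^{-\lambda_k\pi(r+v)^2}\,\mathrm{d}r,
\]
which is just the nearest-AP distance average of the conditional no-handoff probability against the PPP contact density $2\pi\lambda_k r\,e^{-\lambda_k\pi r^2}$, the two $\pi r^2$ terms in the combined exponent having cancelled.

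I would then evaluate this integral by shifting the Gaussian. Writing $a=\lambda_k\pi$ and substituting $u=r+v$ turns it into $\int_{v}^{\infty}(u-v)\,e^{-au^2}\,\mathrm{d}u$, which splits into two standard pieces. The first, $\int_{v}^{\infty} u\,e^{-au^2}\,\mathrm{d}u=\tfrac{1}{2a}e^{-av^2}$, is elementary. The second is the Gaussian tail $\int_{v}^{\infty} e^{-au^2}\,\mathrm{d}u$, which I would rescale via $t=u\sqrt{2a}$ into the form $\sqrt{\pi/a}\,Q\!\left(v\sqrt{2a}\right)$, where $Q(\cdot)$ is the function defined in the statement.

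Finally I would collect the constants. The prefactor $2a$ multiplying $\tfrac{1}{2a}e^{-av^2}$ reduces to the single exponential $e^{-\lambda_k\pi v^2}$, while $2a\cdot v\cdot\sqrt{\pi/a}=2v\sqrt{\pi a}=2v\pi\sqrt{\lambda_k}$ premultiplies the $Q$-term, whose argument $v\sqrt{2a}$ becomes $v\sqrt{2\pi\lambda_k}$. Substituting $a=\lambda_k\pi$ throughout reproduces~\eqref{eq:specialcase} exactly. I expect the only delicate step to be the bookkeeping in the tail-integral rescaling: one must carry the factor $\sqrt{2a}$ through carefully so that the $Q$-function argument emerges as $v\sqrt{2\pi\lambda_k}$ and the surviving prefactor collapses cleanly to $2v\pi\sqrt{\lambda_k}$, since a stray factor there is the easiest way to end up with the wrong pair of coefficients in front of the two terms.
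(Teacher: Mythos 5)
Your proposal is correct and follows essentially the same route as the paper: specialize the conditional no-handoff probability to $\theta=0$ so that $R=r+v$ and the $\pi r^{2}$ terms cancel against the contact-distance density, then evaluate $2\pi\lambda_{k}\int_{0}^{\infty} r\,e^{-\lambda_{k}\pi(r+v)^{2}}\,\mathrm{d}r$ by a Gaussian shift. The paper performs the substitution in a single step, $t=\sqrt{2\pi\lambda_{k}}\,(r+v)$, whereas you shift first and rescale second, but the bookkeeping and the resulting coefficients agree exactly with~\eqref{eq:specialcase}.
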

\begin{proof}
From~\eqref{eq:handoffcondition}, the handoff rate is given by:

\begin{dmath}
\mathds{P}(H_{k}) = 1 - \mathds{E}_{R_{k}} \Big[\mathds{P}(\widebar{H}_{k} | r, \theta \mbox{ = } 0) \Big]   \\
= 1 - \int_{r = 0}^{\infty}e^{ -\lambda_{k} \pi \left(R^{2} - r^{2}\right)} f_{R_{k}}\left(r\right)\mathrm{d}r\\
= 1 - \int_{r = 0}^{\infty}e^{ -\lambda_{k} \pi \left(R^{2} - r^{2}\right)} \cdot 2 \pi \lambda_{k} r \cdot e^{-\pi \lambda_{k}r^{2}} \mathrm{d}r \\
\qeq 1 - \int_{r = 0}^{\infty}e^{ -\lambda_{k} \pi (r + v)^{2}} \cdot 2 \pi \lambda_{k} r \mathrm{d}r \\
\qeqb 1 - \left[ e^{-\lambda_{k} v^{2}\pi} - 2 v \pi \sqrt{\lambda_{k}} \cdot \int_{v \sqrt{2 \pi \lambda_{k}}} ^{\infty} \frac{1}{\sqrt{2 \pi}}e^{-t^{2}/2}\mathrm{d}t
 \right] \\
= 1 - \left( e^{-\lambda_{k} v^{2}\pi} - 2 v \pi \sqrt{\lambda_{k}}\cdot Q(v \sqrt{2\pi\lambda_{k}})\right),
\end{dmath}
where $(a)$ follows from using $R^{2} = (r + v)^{2}$ when $\theta = 0$, and $(b)$ follows from the change of variable $t = \sqrt{2\pi\lambda_{k}}(r + v)$, giving the desired result.
\end{proof}

As is clear from~\eqref{eq:handoffcondition}, in the general case where $\theta$ is uniformly distributed, there is no closed-form expression for the handoff rate. However, assuming the user displacement is much smaller than the connection distance, $v\ll R$, the handoff rate for the general case can be further simplified as derived below.
\begin{cor}
\label{propos:handoffRateGen}
The handoff rate $\mathcal{H}_{k}(v, \lambda_{k})$ for a typical mobile user moving a distance $v$ in a unit of time in a network where $v\ll R$ such that $\frac{v \sin \theta}{R} \approx 0$ and the access points are distributed according to a homogeneous PPP with density $\lambda_{k}$ is given by:
\begin{equation}
\label{eq:handoffRateGen}
\mathcal{H}_{k}(v, \lambda_{k})  = \\
1 - \frac{1}{\pi}\int_{\theta = 0}^{\pi}  \left [ 1 - 2 b(v, \lambda_{k}, \theta)\sqrt{\pi}e^{b^{2}(v, \lambda_{k}, \theta)} Q(\sqrt{2}b(v, \lambda_{k}, \theta)) \right] \exp \left( -\lambda_{k} v^{2} (\pi - \theta) \right) \mathrm{d}\theta,
\end{equation}
where $b(v, \lambda_{k}, \theta) = \sqrt{\pi \lambda_{k}} \frac{va(\theta)}{2\pi}$, and $a(\theta) = 2\cos \theta (\pi - \theta) + \sin \theta$.
\end{cor}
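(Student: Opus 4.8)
The plan is to start from the double-integral expression for $\mathcal{H}_{k}(v,\lambda_{k})$ given in Theorem~\ref{thrm:handoffcondition} and carry out the inner integration over $r$ in closed form after invoking the stated approximation. First I would impose $\sin^{-1}(v\sin\theta/R)\approx 0$, which removes the only term in the exponent that couples $r$ and $\theta$ in a non-polynomial way. Substituting $R^{2}=r^{2}+v^{2}+2rv\cos\theta$ into the remaining exponent $-\lambda_{k}\big(R^{2}(\pi-\theta)+r^{2}\theta+rv\sin\theta\big)$ and collecting powers of $r$, the coefficient of $r^{2}$ collapses to $\pi$ (since $(\pi-\theta)+\theta=\pi$), the coefficient of $r$ becomes $v\,a(\theta)$ with $a(\theta)=2\cos\theta(\pi-\theta)+\sin\theta$, and the $r$-independent remainder is $v^{2}(\pi-\theta)$. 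This turns the exponent into the clean quadratic $-\lambda_{k}\big(\pi r^{2}+v\,a(\theta)\,r+v^{2}(\pi-\theta)\big)$.

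The second step is to factor out $e^{-\lambda_{k}v^{2}(\pi-\theta)}$, which does not depend on $r$, and evaluate the residual integral $\int_{0}^{\infty} r\,e^{-\lambda_{k}\pi r^{2}-\lambda_{k}v a(\theta) r}\,\mathrm{d}r$. I would complete the square in the exponent and substitute $u=r+\tfrac{v a(\theta)}{2\pi}$, splitting the integrand into a term proportional to $u\,e^{-\lambda_{k}\pi u^{2}}$ — which integrates elementarily to an exponential — and a term proportional to $e^{-\lambda_{k}\pi u^{2}}$, whose tail integral is a Gaussian $Q$-function. A final change of variable $t=u\sqrt{2\pi\lambda_{k}}$ converts that tail into $Q(\sqrt{2}\,b)$ with $b=b(v,\lambda_{k},\theta)=\sqrt{\pi\lambda_{k}}\,\tfrac{v a(\theta)}{2\pi}$, and the multiplicative factor $e^{v^{2}a^{2}(\theta)\lambda_{k}/(4\pi)}$ produced by completing the square is exactly $e^{b^{2}}$.

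Reassembling, the inner $r$-integral (including its $2\pi\lambda_{k} r$ weight) equals $\big[1-2b\sqrt{\pi}\,e^{b^{2}}Q(\sqrt{2}\,b)\big]e^{-\lambda_{k}v^{2}(\pi-\theta)}$, where I use the identity $v a(\theta)\sqrt{\lambda_{k}}=2\sqrt{\pi}\,b$ to absorb the remaining prefactor cleanly. Inserting this into the outer average $1-\frac{1}{\pi}\int_{0}^{\pi}(\cdot)\,\mathrm{d}\theta$ yields the claimed expression~\eqref{eq:handoffRateGen}.

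I expect the only genuine obstacle to be the bookkeeping in the $r$-integral: completing the square, correctly tracking the constant $e^{b^{2}}$ that the shift produces, and matching the resulting normalization to the precise definitions of $b$ and of the $Q$-function. Everything else is algebraic collection of terms and a direct substitution; the approximation itself enters only once, at the very first step, through discarding the $\sin^{-1}$ term.
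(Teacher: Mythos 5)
Your proposal is correct and follows essentially the same route as the paper: drop the $\sin^{-1}(v\sin\theta/R)$ term, expand $R^{2}=r^{2}+v^{2}+2rv\cos\theta$ so the exponent becomes the quadratic $-\lambda_{k}\bigl(\pi r^{2}+v a(\theta)r+v^{2}(\pi-\theta)\bigr)$, complete the square, and reduce the tail integral to $Q(\sqrt{2}b)$ via the shift-and-rescale substitution. Your intermediate checks (that the completion-of-square constant is $e^{b^{2}}$ and that $v a(\theta)\sqrt{\lambda_{k}}=2\sqrt{\pi}\,b$) are both accurate, so the derivation goes through exactly as in the paper.
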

\begin{proof} 
Under the assumption that the movement per unit time is much smaller than the connection distance, i.e., $\frac{v \sin \theta}{R} \approx 0$, the probability of handoff conditioned on $r$ and $\theta$ in~\eqref{eq:handoffcondition} in Theorem~\ref{thrm:handoffcondition} is simplified to:
\begin{dmath}
\label{eq:handoffApprox}
\mathds{P}(H_{k}|r, \theta) = 1- \exp \Bigg( -\lambda_{k} \Big[R^{2} (\pi - \theta)- r^{2} (\pi - \theta) + rv\sin \theta \Big] \Bigg) \\
= 1 - \exp \Bigg( -\lambda_{k} \Big[v^{2}(\pi - \theta) +  rv \left( 2\cos \theta(\pi - \theta) + \sin \theta\right) \Big] \Bigg) \\
= 1 - \exp \Bigg( -\lambda_{k} \Big[v^{2}(\pi - \theta) +  rv a(\theta) \Big] \Bigg),
\end{dmath}
where $a(\theta) = 2\cos \theta (\pi - \theta) + \sin \theta$. The handoff rate is then given by:

\begin{dmath} 
\label{HoffRate}
\mathds {P}(H_{k}) =  1 - \mathds{E}_{\Theta} \Big[ \mathds{E}_{R_{k}} \Big[\mathds{P}(\widebar{H}_{k} | r, \theta) \Big] \Big] \\
= 1 - \frac{1}{\pi} \int_{\theta= 0}^{\pi} \int_{r = 0}^{\infty} e^{ -\lambda_{k} \big(v^{2}(\pi - \theta) + rva(\theta) \big)} f_{R_{k}}\left(r\right) \mathrm{d}r \mathrm{d}\theta \\
= 1 - \frac{1}{\pi} \displaystyle \int_{\theta= 0}^{\pi} \int_{r= 0}^{\infty} e^{ -\lambda_{k} \big(v^{2}(\pi - \theta) + rva(\theta) \big) } \cdot 2\pi \lambda_{k} r e ^{- \pi \lambda_{k} r^{2}} \mathrm{d}r \mathrm{d}\theta \\
= 1 - \frac{1}{\pi} \displaystyle \int_{\theta= 0}^{\pi} e^{-\lambda_{k} v^{2}(\pi - \theta)} \int_{r= 0}^{\infty} e^{-\lambda_{k} \pi \Big(\left(r + \frac{va(\theta)}{2\pi} \right)^{2} - (\frac{va(\theta)}{2\pi})^{2} \Big)}2\pi \lambda_{k} r \mathrm{d}r \mathrm{d}\theta \\
= 1 - \frac{1}{\pi} \displaystyle \int_{\theta= 0}^{\pi} e^{-\lambda_{k} v^{2}(\pi - \theta)}\cdot e^{\lambda_{k}\pi \left(\frac{va(\theta)}{2\pi} \right)^{2}} \int_{t=\sqrt{2\pi\lambda_{k}}\frac{va(\theta)}{2\pi}}^{\infty} e^{-t^{2}/2}\left(t - va(\theta)\sqrt{\frac{\lambda_{k}}{2\pi}}\right)\mathrm{d}t\mathrm{d}\theta \\
= 1 - \frac{1}{\pi}\int_{\theta = 0}^{\pi}  \left [ 1 - 2 b(v, \lambda_{k}, \theta)\sqrt{\pi}e^{b^{2}(v, \lambda_{k}, \theta)} Q(\sqrt{2}b(v, \lambda_{k}, \theta)) \right] \exp \left( -\lambda_{k} v^{2} (\pi - \theta) \right) \mathrm{d}\theta.
\end{dmath}
Setting $b(v, \lambda_{k}, \theta) = \sqrt{\pi \lambda_{k}} \frac{v a(\theta)}{2\pi}$ and employing the change of variable $t = \sqrt{2\pi\lambda_{k}}(r + \frac{va(\theta)}{2 \pi})$ gives the desired result, and the proof is complete
\end{proof}

In Fig.~\ref{fig:handoff}, we examine the validity of the expressions for the handoff rate. As seen in the figure, since~\eqref{eq:specialcase} is an exact expression, the numerical simulations exactly match the analysis for the special case where the user is moving radially away from the serving AP. As expected intuitively, the handoff rate increases with the increase in the user displacement or the AP density. For the general case, the plots are obtained using the approximate expression in~\eqref{eq:handoffRateGen}, and the integral is obtained numerically. The slight deviation of the analysis from the numerical simulations for large $v$ in Fig.~\ref{fig:VsVel} or very large AP density in Fig.~\ref{fig:VsDen} is the result of user displacement becoming comparable to the connection distance which is inconsistent with the approximation. However, for reasonable speeds and AP densities, the approximation in~\eqref{eq:handoffRateGen} is quite accurate and will be used throughout the paper.

The handoff rate, derived in this section, allows us to incorporate the user mobility in the coverage analysis and tier association discussed in the following sections.
\begin{figure}
     \begin{center}
        \subfigure[Handoff rate versus user displacement. $\lambda_{k} = 1/(1000 m^{2})$.]{%
            \label{fig:VsVel}
            \includegraphics[width=0.5\textwidth]{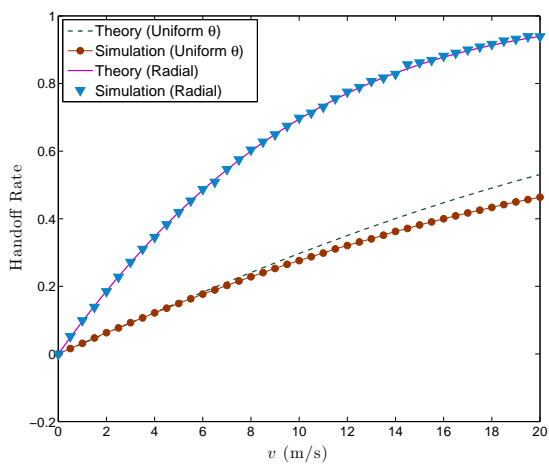}
        }%
        \subfigure[Handoff rate versus AP density. $v = 5$m/s.]{%
           \label{fig:VsDen}
           \includegraphics[width=0.5\textwidth]{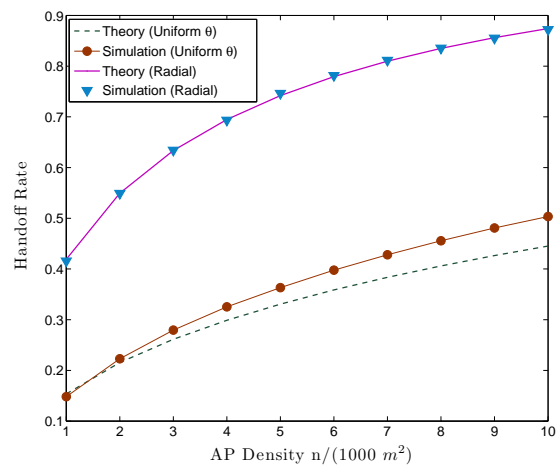}
        } 
        \end{center}
    \caption{%
       Handoff rate versus: (a) user displacement in a unit of time $v$, (b) AP density $\lambda_{k}$, for both the general case ($\theta$ has uniform distribution) and the special case of radial movement ($\theta = 0$).
 }%
   \label{fig:handoff}
\end{figure}

\section{Coverage Probability with Handoffs}
\label{coverage1}
In~\cite{andrews:11}, the probability of coverage in an interference-limited, single-tier network with the AP locations modelled by a homogeneous PPP, was derived to be:
\begin{equation}
\label{CovAndrews}
\mathds{P}(\gamma \geq \tau) = \frac{1}{1 + \rho(\tau, \alpha)},
\end{equation}
where $\gamma$ is the received SIR at the user, and $\tau$ denotes the SIR threshold. For the case where both the desired and the interfering signals undergo Rayleigh fading, $\rho(\tau, \alpha)$ is given by:
\begin{equation}
\rho(\tau, \alpha) = \tau^{2/\alpha}\int_{\tau^{-2/ \alpha}}^{\infty}\frac{1}{1 + u^{\alpha/2}}\mathrm{d}u.
\end{equation}
The expression given in~\eqref{CovAndrews} is the probability that a single user is in coverage; it can also be interpreted as the fraction of all users in coverage at any given time. Each user connects to the AP with the largest signal power. In the analysis of~\cite{andrews:11}, the users are stationary, and there are no handoffs. Furthermore, with the focus on SIR,~\eqref{CovAndrews} indicates that the probability of coverage is independent of the AP transmit power or density. In other words, increasing the density of APs or transmit power will not affect the resulting SIR. Consequently, when handoffs are not accounted for, the network capacity increases linearly with the number of APs. However, for a mobile user, as shown in Fig.~\ref{fig:VsDen}, the handoff rate increases with the AP density which negatively affects the system performance.

The cost associated with handoffs is due to service delays or dropped calls. The higher the handoff rate, the higher the chance of degradation in the quality of service. Once we incorporate mobility, a fraction of users that initially met the SIR criterion for coverage would experience a connection failure due to the handoff. This fraction is determined by the system sensitivity to handoffs. Conversely, if the user is in coverage and no handoff occurs, it stays in coverage and continues to receive service. To take user mobility into account, we consider a linear function that reflects the cost of handoffs. Under this model, the probability of coverage is given by:
\begin{equation}
\label{CovCost}
P_{c}(v, \lambda_{k}, \beta,\tau_{k}, \alpha) = \mathds{P}\left(\gamma_{k} \geq \tau_{k}, \widebar{H}_{k} \right) + (1 - \beta) \mathds{P}\left(\gamma_{k} \geq \tau_{k}, H_{k}\right),
\end{equation}
where $\gamma_{k}$ is the received SIR at the user. The first term is the probability of the joint event that the user is in coverage and no handoff occurs. The second term is the probability of the joint event that the user is in coverage and handoff occurs penalized by the cost of handoff; here, $\beta \in [0,1]$ is the probability of connection failure due to the handoff, i.e., a fraction $\beta$ of handoffs result in dropped connections even though the user is in coverage from the SIR point-of-view. 
The coefficient $\beta$, in effect, measures the system sensitivity to handoffs. Its value depends on a number of factors, e.g., the radio access technology, the mobility protocol, the protocol's layer of operation and the link speed~\cite{banerjee:03,nakajima:03,Lin:07}. At one extreme, as $\beta \rightarrow 1$, $P_{c}(v, \lambda_{k}, \beta,\tau_{k}, \alpha) = \mathds{P}\left(\gamma_{k} \geq \tau_{k}, \widebar{H}_{k} \right)$, stating that only those users that are initially in coverage and do not undergo a handoff maintain their connection, i.e., every handoff results in an outage. At the other extreme, as $\beta \rightarrow 0$, the system is not sensitive to handoffs and the expression for the probability of coverage reduces to~\eqref{CovAndrews}, since $\mathds{P}\left(\gamma_{k} \geq \tau_{k}, \widebar{H}_{k} \right) + \mathds{P}\left(\gamma_{k} \geq \tau_{k}, H_{k}\right) = \mathds{P}(\gamma_{k} \geq \tau_{k})$. It is worth noting that~\eqref{CovCost} predicts the coverage probability for a mobile user, already in coverage, immediately after knowing the user's speed and the system sensitivity to handoffs. 

Using the handoff rate derived in Section~\ref{sec:hoffrate} and the overall probability of coverage given in~\eqref{CovCost}, we can incorporate the user mobility in the coverage analysis as follows.
\begin{thm}
\label{propos:CovHandoff}
The probability of coverage $P_{c}(v, \lambda_{k}, \beta,\tau_{k}, \alpha)$ for a typical mobile user moving a distance $v$ in a unit of time in a network with access points distributed according to a homogeneous PPP with density $\lambda_{k}$ is given by:
\begin{multline}
P_{c}(v,\lambda_{k}, \beta, \tau_{k}, \alpha) = \frac{1}{1 + \rho_{k}} \cdot\\
\left \{(1 - \beta) + \beta \left[ \frac{1}{\pi}\int_{\theta = 0}^{\pi}  \left[ 1 -  2 b' \sqrt{\pi} e^{{b'}^2} Q(\sqrt{2}b') \right] \exp\Big(-\lambda_{k} v^2(\pi - \theta)\Big) \mathrm{d}\theta \right] \right\}
\end{multline}
where $b' = b'(v, \lambda_{k}, \theta, \tau_{k}, \alpha) = \frac{va(\theta)}{2\pi}\sqrt{\frac{\pi\lambda_{k}}{1 + \rho_{k}}}$ and $\rho_{k} = \rho(\tau_{k}, \alpha)$.
\end{thm}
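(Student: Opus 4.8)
The plan is to collapse the two-term cost expression in \eqref{CovCost} into a single joint probability and then evaluate that probability by conditioning on the geometry. First I would apply the law of total probability: since $\mathds{P}(\gamma_{k}\geq\tau_{k},\widebar{H}_{k})+\mathds{P}(\gamma_{k}\geq\tau_{k},H_{k})=\mathds{P}(\gamma_{k}\geq\tau_{k})$, equation \eqref{CovCost} rearranges to $P_{c}=(1-\beta)\,\mathds{P}(\gamma_{k}\geq\tau_{k})+\beta\,\mathds{P}(\gamma_{k}\geq\tau_{k},\widebar{H}_{k})$. Substituting the stationary coverage probability $\mathds{P}(\gamma_{k}\geq\tau_{k})=1/(1+\rho_{k})$ from \eqref{CovAndrews} produces the $(1-\beta)/(1+\rho_{k})$ term at once, so the entire task reduces to evaluating the joint probability $\mathds{P}(\gamma_{k}\geq\tau_{k},\widebar{H}_{k})$.

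For that joint term I would condition on the connection distance $r$ and the movement angle $\theta$, using two ingredients that are already in hand. The first is the conditional coverage probability underlying \eqref{CovAndrews}, namely $\mathds{P}(\gamma_{k}\geq\tau_{k}\mid r)=\exp(-\pi\lambda_{k}\rho_{k}r^{2})$, i.e.\ the quantity whose average against the nearest-AP density $f_{R_{k}}(r)=2\pi\lambda_{k}r\,e^{-\pi\lambda_{k}r^{2}}$ returns $1/(1+\rho_{k})$. The second is the conditional no-handoff probability obtained in the proof of Corollary~\ref{propos:handoffRateGen} under the $v\ll R$ approximation, $\mathds{P}(\widebar{H}_{k}\mid r,\theta)=\exp(-\lambda_{k}[v^{2}(\pi-\theta)+rv\,a(\theta)])$. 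I would treat these as conditionally independent given $(r,\theta)$, take their product, and then average over $r$ with $f_{R_{k}}$ and over $\theta$ uniform on $[0,\pi)$.

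The remaining computation mirrors the $Q$-function manipulation already performed in \eqref{HoffRate}, with one decisive change. On collecting exponents, the coverage factor $e^{-\pi\lambda_{k}\rho_{k}r^{2}}$ merges with the $e^{-\pi\lambda_{k}r^{2}}$ of $f_{R_{k}}(r)$ into $e^{-\pi\lambda_{k}(1+\rho_{k})r^{2}}$; that is, the effective density $\pi\lambda_{k}$ in the Gaussian to be completed is replaced by $\pi\lambda_{k}(1+\rho_{k})$. Completing the square in $r$, substituting $t=\sqrt{2\pi\lambda_{k}(1+\rho_{k})}\,(r+\tfrac{v a(\theta)}{2\pi(1+\rho_{k})})$, and splitting $\int_{t_{0}}^{\infty}e^{-t^{2}/2}(t-t_{0})\,\mathrm{d}t$ into the elementary $e^{-t_{0}^{2}/2}$ piece and the $\sqrt{2\pi}\,Q(t_{0})$ piece reproduces the bracketed factor $[1-2b'\sqrt{\pi}\,e^{b'^{2}}Q(\sqrt{2}b')]$, now with the rescaled argument $b'=\tfrac{v a(\theta)}{2\pi}\sqrt{\pi\lambda_{k}/(1+\rho_{k})}$. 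The density prefactor $2\pi\lambda_{k}$ divided by the completed-square coefficient $2\pi\lambda_{k}(1+\rho_{k})$ leaves precisely the overall $1/(1+\rho_{k})$, giving $\mathds{P}(\gamma_{k}\geq\tau_{k},\widebar{H}_{k})=\tfrac{1}{1+\rho_{k}}\cdot\tfrac{1}{\pi}\int_{0}^{\pi}[1-2b'\sqrt{\pi}\,e^{b'^{2}}Q(\sqrt{2}b')]\,e^{-\lambda_{k}v^{2}(\pi-\theta)}\,\mathrm{d}\theta$, which combined with the first term is the claimed expression.

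I expect the main obstacle to be justifying the conditional-independence step rather than the integral bookkeeping. The coverage event depends on the interference from \emph{all} APs lying outside the disk $\mathcal{C}$, whereas the no-handoff event forbids APs precisely in the excess region $\mathcal{A}\setminus\mathcal{C}$, which is a sub-region of that same exterior; the two events are therefore genuinely correlated through this shared sliver, and factoring them as a product is not exact. The defensible argument is that the interference contributed by the small, $O(v)$ excess region has Laplace transform $\approx 1$ in the very regime $v\ll R$ already assumed in Corollary~\ref{propos:handoffRateGen}, so the product form is the consistent approximation here. Once that is granted, everything else is the routine Gaussian reduction, where the only point needing care is tracking how the $(1+\rho_{k})$ factor simultaneously rescales the exponent, the $Q$-function argument through $b'$, and the leading coefficient.
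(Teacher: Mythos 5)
Your proposal is correct and follows essentially the same route as the paper's proof: the same rearrangement of the cost expression into $(1-\beta)\mathds{P}(\gamma_{k}\geq\tau_{k})+\beta\,\mathds{P}(\gamma_{k}\geq\tau_{k},\widebar{H}_{k})$, the same conditioning on $(r,\theta)$ with the product of $e^{-\pi\lambda_{k}\rho_{k}r^{2}}$ and the approximate no-handoff probability, and the same completed-square/$Q$-function reduction with the $(1+\rho_{k})$ rescaling. Your explicit discussion of why the conditional-independence factorization of the coverage and no-handoff events is only an approximation (via the shared excess region $\mathcal{A}\setminus\mathcal{C}$) is a point the paper uses silently in step $(a)$ of its derivation, so your treatment is, if anything, more careful on that score.
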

\begin{proof}
From~\eqref{CovCost}, the probability of coverage conditioned on $r$ and $\theta$ is given by:
\begin{equation}
\mathds{P}\left(\gamma_{k} \geq \tau_{k}, \widebar{H}_{k} | r, \theta \right) + (1 - \beta)\mathds{P}\left(\gamma_{k} \geq \tau_{k},H_{k} | r, \theta \right),
\end{equation}
where
\begin{equation}
\begin{array}{ll}
\mathds{P}\left(\gamma_{k} \geq \tau_{k},H_{k} | r, \theta \right) & = \mathds{P}(\gamma_{k} \geq \tau_{k}|r)\cdot \mathds{P}(H_{k} | r, \theta) \\
& = \mathds{P}(\gamma_{k} \geq \tau_{k}|r)\cdot \Big(1 - \mathds{P}(\widebar{H}_{k} | r, \theta)  \Big) \\
& = \mathds{P}(\gamma_{k} \geq \tau_{k}|r) - \mathds{P}(\gamma_{k} \geq \tau_{k}|r)\cdot \mathds{P}(\widebar{H}_{k} |r,\theta)\\
& = \mathds{P}(\gamma_{k} \geq \tau_{k}|r) - \mathds{P}(\gamma_{k} \geq \tau_{k}, \widebar{H}_{k} |r,\theta).
\end{array}
\end{equation}
Hence,
\begin{equation}
\begin{array}{l}
\label{CovNoHandoff}
\mathds{P}\left(\gamma_{k} \geq \tau_{k}, \widebar{H}_{k} | r, \theta \right) + (1 - \beta)\mathds{P}\left(\gamma_{k} \geq \tau_{k},H_{k} | r, \theta \right) \\
  = \mathds{P}\left(\gamma_{k} \geq \tau_{k}, \widebar{H}_{k} | r, \theta \right) + (1 - \beta)\Big( \mathds{P}(\gamma_{k} \geq \tau_{k}|r) - \mathds{P}(\gamma_{k} \geq \tau_{k}, \widebar{H}_{k} |r,\theta) \Big) \\
 = (1 - \beta)\mathds{P}(\gamma_{k} \geq \tau_{k}|r) + \beta \mathds{P}(\gamma_{k} \geq \tau_{k}, \widebar{H}_{k} |r,\theta).
\end{array}
\end{equation}
The probability of coverage with Rayleigh fading at connection distance $r$ is given by~\cite[Theorem 2]{andrews:11}:
\begin{equation}
\label{eq1}
\mathds{P}\left( \gamma_{k} \geq \tau_{k}|r \right) = e^{-\pi \lambda_{k} r^{2}\rho(\tau_{k}, \alpha)},
\end{equation}
therefore:
\begin{equation}
\label{eq:Final}
\begin{array}{ll}
P_{c}(v,\lambda_{k}, \beta, \tau_{k}, \alpha) & = \mathds{E}_{\Theta}\Big[\mathds{E}_{R_{k}}\Big[\mathds{P}( \gamma_{k} \geq \tau_{k} |r, \theta) \Big] \Big] \\
& = (1 - \beta) \mathds{P}(\gamma_{k} \geq \tau_{k}) +  \beta \mathds{E}_{\Theta}\Big[\mathds{E}_{R_{k}}\Big[\mathds{P}( \gamma_{k} \geq \tau_{k}, \widebar{H}_{k} |r, \theta) \Big] \Big] \\
& = \frac{1 - \beta}{1 + \rho(\tau_{k}, \alpha)} + \beta \mathds{E}_{\Theta}\Big[\mathds{E}_{R_{k}}\Big[\mathds{P}( \gamma_{k} \geq \tau_{k}, \widebar{H}_{k} |r, \theta) \Big] \Big].
\end{array}
\end{equation}
In the second term, the probability of the joint event that the user is in coverage and the handoff does not occur is given by:
\begin{equation}
\begin{split}
\label{term2}
\mathds{P}\left( \gamma_{k} \geq \tau_{k}, \widebar{H}_{k} \right) & = \mathds{E}_{\Theta}\Big[\mathds{E}_{R_{k}}\left[\mathds{P}\left( \gamma_{k} \geq \tau_{k}, \widebar{H}_{k}|r, \theta \right)  \Big] \right] \\
& = \displaystyle \frac{1}{\pi} \int_{\theta= 0}^{\pi}\int_{r=0}^{\infty}\mathds{P}\left( \gamma_{k} \geq \tau_{k}|r \right)\cdot \mathds{P} \left(\widebar{H}_{k}|r, \theta \right)\cdot f_{R_{k}}\left(r\right)  \mathrm{d}r \mathrm{d}\theta \\
\displaystyle & \qeq \frac{1}{\pi} \int_{\theta= 0}^{\pi}\int_{r=0}^{\infty} e^{-\pi \lambda_{k} r^{2}\rho(\tau_{k}, \alpha)} \cdot e^{-\lambda_{k}\left(v^{2}(\pi - \theta) + rva(\theta) \right)}\cdot 2\pi \lambda_{k} r  e ^{- \lambda_{k} \pi r^{2}}  \mathrm{d}r \mathrm{d}\theta \\
\displaystyle & = \frac{1}{\pi} \displaystyle \int_{\theta= 0}^{\pi} e^{-\lambda_{k} v^{2}(\pi - \theta)} \int_{r= 0}^{\infty} e^{-\lambda_{k} \pi (1 + \rho_{k}) \Big(\left(r + \frac{va(\theta)}{2\pi (1 + \rho_{k})} \right)^{2} - (\frac{va(\theta)}{2\pi(1 + \rho_{k})})^{2} \Big)}2\pi \lambda_{k} r \mathrm{d}r \mathrm{d}\theta \\
& = \frac{1}{\pi} \left(\frac{1}{1 + \rho_{k}}\right) \int_{\theta = 0}^{\pi} e^{-\lambda_{k} v^2(\pi - \theta)} \left[ 1 -  2b' \sqrt{\pi} e^{{b'}^2} Q(\sqrt{2}b') \right]  \mathrm{d}\theta, \\
\end{split}
\end{equation}
where $\rho_{k} = \rho(\tau_{k}, \alpha)$ and $(a)$ follows from using the probability of the complementary event in~\eqref{eq:handoffApprox} given by:
\begin{equation}
\begin{array}{ll}
\mathds{P} \left(\widebar{H}_{k}|r, \theta \right) & = 1 - \mathds{P} \left(H_{k}|r, \theta \right) \\
& = \exp \Bigg( -\lambda_{k} \Big[v^{2}(\pi - \theta) +  rv a(\theta) \Big] \Bigg).
\end{array}
\end{equation}
The final two steps are similar to the proof for Corollary~\ref{propos:handoffRateGen}, employing the change of variable $t=\sqrt{2\pi\lambda_{k}(1 + \rho_{k})}\left(r + \frac{va(\theta)}{2 \pi (1 + \rho_{k})}\right)$ and setting $b'(v, \lambda_{k}, \theta, \tau_{k}, \alpha) = \frac{va(\theta)}{2\pi}\sqrt{\frac{\pi\lambda_{k}}{1 + \rho_{k}}}$. Finally, using~\eqref{term2} in~\eqref{eq:Final} gives the desired result, and the proof is complete.
\end{proof}

The simulation results in Fig.~\ref{fig:CovHandoffEffect} show the effect of mobility on coverage. In both figures, the probability of coverage for stationary users is $1/(1 + \rho(\tau_{k},\alpha)) = 0.49$ for $\tau_{k} = 0$dB, and $\alpha = 3.5$. For mobile users, on the other hand, the probability of coverage decreases with the increase in the handoff rate; as expected, this negative effect is more noticeable when the probability of connection failure due to handoffs is significant (i.e., large $\beta$ as in Fig.~\ref{fig:CovVsVelSmallKappa}). While the handoff rate increases linearly with the user speed in a network with low AP density, it saturates in a high density network.

The plots in Fig.~\ref{fig:CovVsTau} show the probability of coverage versus the SIR threshold $\tau_{k}$ for a mobile user for different AP densities. As expected, although the SIR distribution remains the same regardless of the AP density, the probability of coverage in a network with a higher AP density decreases due to frequent handoffs. The degradation in coverage not only depends on the network sensitivity to handoffs, determined by $\beta$, but also on the SIR threshold. While the probability of coverage in a network with a high AP density is lower than that with a low AP density, the difference between the two is more noticeable at lower SIR thresholds, or in a network with a large $\beta$. It is this interplay between mobility and tier association that leads us to consider mobility-aware tier association in the next section.
\begin{figure}
     \begin{center}
        \subfigure[Probability of coverage versus $v$; $\beta = 0.3$.]{
           \label{fig:CovVsVelLargeKappa}
           \includegraphics[width = 0.47\textwidth,height = 0.38\textwidth]{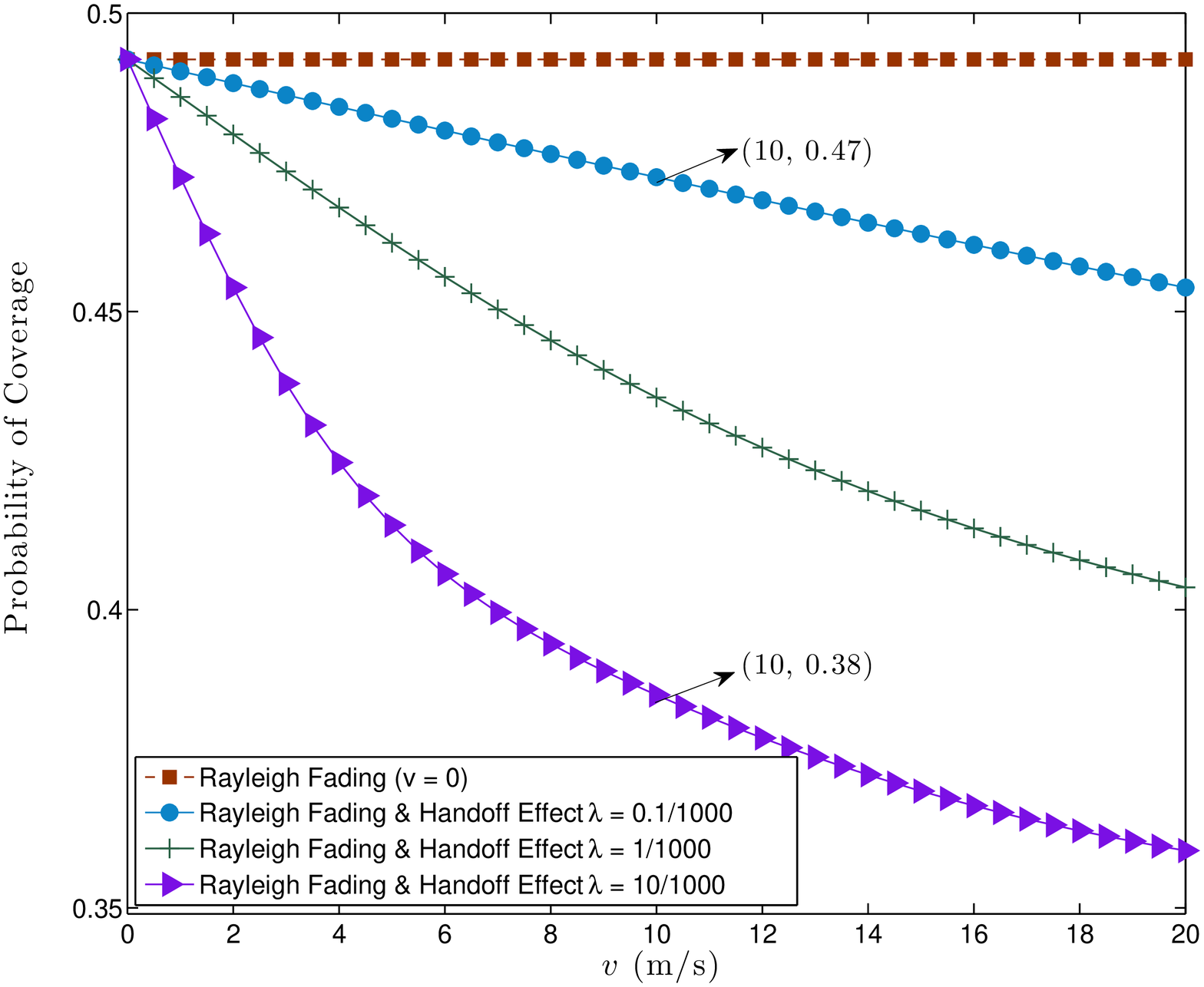}
        }
        \subfigure[Probability of coverage versus $v$; $\beta = 0.9$.]{
           \label{fig:CovVsVelSmallKappa}
           \includegraphics[width=0.47\textwidth,height = 0.38\textwidth]{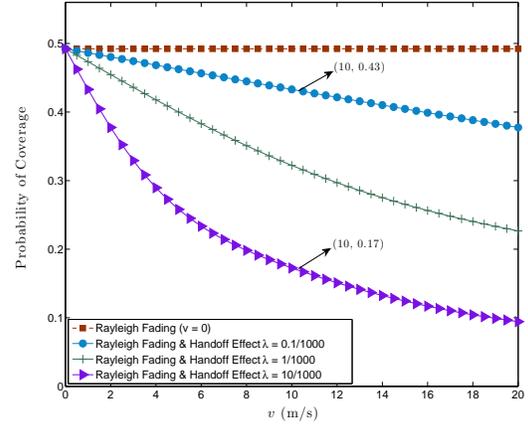}
        }
        \end{center}
    \caption{%
        Probability of coverage versus user displacement $v$ in a unit of time for different AP densities and $\tau_{k} = 0$dB: (a) the system is less sensitive to handoffs, $\beta = 0.3$; (b) the probability of connection failure due to handoffs is large, $\beta = 0.9$.}
   \label{fig:CovHandoffEffect}
\end{figure}

 \begin{figure}
     \begin{center}
        \subfigure[Probability of coverage versus $\tau_{k}$; $\beta = 0.3$.]{%
            \label{fig:CovVsKappa}
            \includegraphics[width=0.47\textwidth]{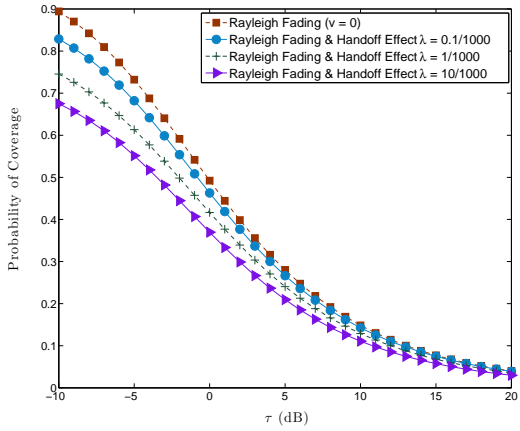}
        }%
        \subfigure[Probability of coverage versus $\tau_{k}$; $\beta = 0.9$.]{%
           \includegraphics[width=0.47\textwidth]{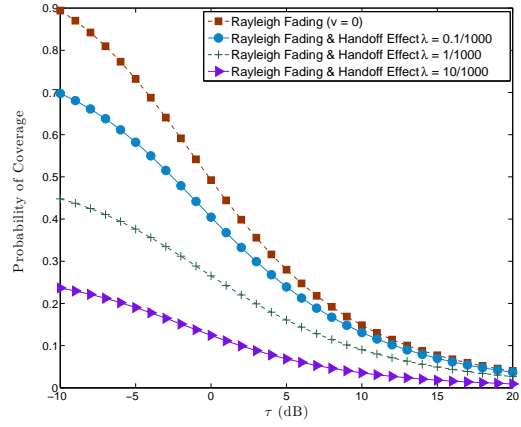}

        }
        \end{center}
    \caption{%
        Probability of coverage versus SIR threshold $\tau_{k}$ for: (a) $\beta = 0.3$ , (b) $\beta = 0.9$. $v = 15$ in both figures.}%
   \label{fig:CovVsTau}
\end{figure}


\section{Mobility-Aware Tier Association}
\label{coverage2}
In the previous section, we derived the probability of coverage for a mobile user in a single-tier network, and showed how it is affected by the handoff rate. The dependence of the handoff rate on the AP density is the rationale for associating fast moving users with the higher tiers (with smaller AP densities) to compensate for the potential connection failure due to handoffs. In practice, this would be achieved by adjusting the tier's bias factor. In this section, we first present the probability of coverage and the optimum tier association and bias factor for a \textit{stationary} user in a network. Then, as in Section~\ref{coverage1}, we incorporate mobility in the coverage analysis assuming a linear cost function to reflect the impact of handoffs. Finally, we optimize tier association in a multi-tier network to account for the effect of handoffs in maximizing the overall probability of coverage for a mobile user .

In a multi-tier network with the maximum biased average received power as the tier association metric, the probability that a user connects to tier $k$ is determined by the tier's AP density, $\lambda_k$, transmit power, $P_k$, and bias factor, $B_k$, and has been shown to be~\cite{shin:12}:
 \begin{equation}
\label{layerProb}
A_{k} = \frac{\lambda_{k}(P_{k} B_{k})^{2/\alpha}}{\sum_{j = 1}^{K} \lambda_{j} (P_{j} B_{j})^{2/\alpha}} = \frac{1}{\sum_{j = 1}^{K} \widehat{\lambda}_{j} (\widehat{P}_{j}\widehat{B}_{j})^{2/\alpha}},
\end{equation}
where $\widehat{P}_{j} = \frac{P_{j}}{P_{k}}, \widehat{B}_{j} = \frac{B_{j}}{B_{k}}, \widehat{\lambda}_{j} = \frac{\lambda_{j}}{\lambda_{k}}$. While the transmit power and the AP density are mostly determined by the network infrastructure, adjusting the bias factor can dynamically change the user association to different tiers in the network. Let $n$ denote the index of the tier associated to the user. Since the user connects to only one tier at a time, the probability that the user connects to tier $k$ at connection distance $r$ is given by~\cite[Lemma 1]{shin:12} $\mathds{P}(n = k|r) = \prod_{j = 1,j\neq k}^{K}e^{-\pi \lambda_{j}(\widehat{P}_{j}\widehat{B}_{j})^{2/\alpha}r^{2}}.$ The probability of coverage for a stationary user in a multi-tier network with the spectrum shared across the network is derived in~\cite{shin:12}. Our focus is on the special case of \textit{orthogonal} spectrum allocation amongst tiers and the corresponding optimum tier association. The optimum tier association and the bias factor for the maximum SIR coverage for a single-tier two-RAT\footnote{RAT refers to radio access technology. Each RAT is allocated a different frequency of operation.} network was derived in~\cite[Proposition 1]{Singh:13}. Generalizing this result to a $K$-tier network with orthogonal spectrum allocation across tiers is straight forward and we present it below for use later:
\begin{propos}
\label{Proposition1}
\begin{itemize}
\item[(a)] The probability of coverage for a randomly located user in a multi-tier network with orthogonal\footnote{Note that in a multi-tier network with all tiers sharing the same spectrum, the set of interfering APs include all the APs in the network except the serving AP in the serving tier. In this setup with the same tier association metric, the overall probability of coverage is given by~\eqref{OverallCovThrm} with the term $\rho(\tau_{k}, \alpha)$ replaced by $\sum_{j=1}^{K} \widehat{\lambda}_{j}\widehat{P}_{j}^{2/\alpha} \mathcal{Z}(\tau_{k}, \alpha, \widehat{B}_{j})$, where $\mathcal{Z}(\tau_{k}, \alpha, \widehat{B}_{j})$ is given by~\cite{shin:12}:
\begin{equation*}
\mathcal{Z}(\tau_{k}, \alpha, \widehat{B}_{j}) = \tau_{k}^{2/\alpha}\int_{(\widehat{B}_{j}/\tau_{k})^{2/\alpha}}^{\infty}\frac{1}{1 + u^{\alpha/2}}\mathrm {d}u
\end{equation*}
to account for the interference from the other tiers. For such a case, this expression would replace the corresponding expression in the analysis for orthogonal allocation given below. However, it appears that no closed-form expressions are possible and one must resort to numerical solutions.} spectrum allocation among tiers, and the APs in each tier distributed according to a homogeneous PPP with density $\lambda_{k}$, is given by:
\begin{equation}
\label{OverallCovThrm}
\boldsymbol{P}_{c} = \sum_{k=1}^{K}\frac{1}{A_{k}^{-1} + \rho(\tau_{k}, \alpha)}.
\end{equation}
where $A_{k}$ is the tier association probability.
\item[(b)] The optimum tier association probability to maximize the SIR coverage is then given by:
\begin{equation}
\label{partA}
A_{k}^{*} = \displaystyle \frac{1/\rho(\tau_{k}, \alpha)}{\sum_{k = 1}^{K}1/\rho(\tau_{k}, \alpha)}.
\end{equation}
\end{itemize}
\end{propos}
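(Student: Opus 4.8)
The plan is to establish part (a) by conditioning on the serving tier and the connection distance, and then to obtain part (b) by a short concave maximization over the probability simplex.

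For part (a), I would write the overall coverage as a sum over the disjoint events that the user is served by tier $k$, namely $\boldsymbol{P}_{c} = \sum_{k=1}^{K}\mathds{P}(n=k, \gamma_{k}\geq\tau_{k})$. Because the tiers use orthogonal spectrum, a user served by tier $k$ only sees interference from tier $k$; hence, conditioned on the distance $r$ to the nearest tier-$k$ AP, the SIR coverage probability is exactly the single-tier expression $\mathds{P}(\gamma_{k}\geq\tau_{k}\,|\,r) = e^{-\pi\lambda_{k}r^{2}\rho(\tau_{k},\alpha)}$ from~\eqref{eq1}, and this factor is independent of the other tiers' AP positions that determine the association event. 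I would therefore integrate over $r$ the product of three factors: the coverage probability $e^{-\pi\lambda_{k}r^{2}\rho_{k}}$, the conditional association probability $\mathds{P}(n=k\,|\,r) = \prod_{j\neq k}e^{-\pi\lambda_{j}(\widehat{P}_{j}\widehat{B}_{j})^{2/\alpha}r^{2}}$ from~\cite[Lemma 1]{shin:12}, and the nearest-neighbour density $f_{R_{k}}(r) = 2\pi\lambda_{k}r\,e^{-\pi\lambda_{k}r^{2}}$.

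The key step is to collapse the resulting Gaussian-type integral. Combining the three exponentials, the coefficient of $-\pi r^{2}$ in the exponent is $\lambda_{k}\rho_{k} + \lambda_{k} + \sum_{j\neq k}\lambda_{j}(\widehat{P}_{j}\widehat{B}_{j})^{2/\alpha}$, and the crucial identity is that $\lambda_{k} + \sum_{j\neq k}\lambda_{j}(\widehat{P}_{j}\widehat{B}_{j})^{2/\alpha} = \sum_{j=1}^{K}\lambda_{j}(\widehat{P}_{j}\widehat{B}_{j})^{2/\alpha} = \lambda_{k}A_{k}^{-1}$, which follows directly from the definition of $A_{k}$ in~\eqref{layerProb} (recall $\widehat{P}_{k}\widehat{B}_{k}=1$). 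The integral then reduces to $\int_{0}^{\infty}2\pi\lambda_{k}r\,e^{-\pi\lambda_{k}(A_{k}^{-1}+\rho_{k})r^{2}}\,\mathrm{d}r = (A_{k}^{-1}+\rho_{k})^{-1}$, and summing over $k$ yields~\eqref{OverallCovThrm}. I expect the only genuine obstacle here to be bookkeeping: making the conditional-independence argument for the coverage and association factors explicit, and verifying the $\lambda_{k}A_{k}^{-1}$ identity with the biased powers; the integral itself is elementary.

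For part (b), I would maximize $\boldsymbol{P}_{c} = \sum_{k}A_{k}/(1+\rho_{k}A_{k})$ over the simplex $\sum_{k}A_{k}=1$, $A_{k}\geq 0$. Each summand is \emph{concave} in $A_{k}$, since its second derivative is $-2\rho_{k}/(1+\rho_{k}A_{k})^{3}<0$; hence the objective is concave and any stationary point of the Lagrangian is the global maximizer. Setting the $A_{k}$-derivative of $L = \sum_{k}A_{k}/(1+\rho_{k}A_{k}) - \mu\big(\sum_{k}A_{k}-1\big)$ to zero gives $(1+\rho_{k}A_{k})^{-2}=\mu$, so $\rho_{k}A_{k}$ equals the same constant for every $k$, i.e. $A_{k}\propto 1/\rho_{k}$. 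Normalizing on the simplex then produces $A_{k}^{*} = (1/\rho_{k})/\sum_{k}(1/\rho_{k})$, which is~\eqref{partA}. The concavity observation is precisely what guarantees that this interior critical point is the maximizer rather than merely a stationary point, and I regard it as the one nonroutine ingredient of part (b).
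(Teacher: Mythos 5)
Your proposal is correct, and it is essentially the argument the paper itself relies on: the paper omits the proof of Proposition~\ref{Proposition1}, deferring to \cite[Proposition~1]{Singh:13}, but the exact computation you describe for part (a) --- factoring $\mathds{P}(\gamma_{k}\geq\tau_{k},n=k\,|\,r)$ into the conditional coverage, the conditional association probability, and $f_{R_{k}}(r)$, then using $\lambda_{k}A_{k}^{-1}=\sum_{j}\lambda_{j}(\widehat{P}_{j}\widehat{B}_{j})^{2/\alpha}$ to collapse the Gaussian integral to $(A_{k}^{-1}+\rho_{k})^{-1}$ --- appears verbatim as the $v=0$, $\beta=0$ special case of the paper's proof of Theorem~\ref{propos:CovHandoffMultLay}, and your Lagrangian argument for part (b), including the concavity check that certifies the stationary point as the global maximizer, is the standard route taken in \cite{Singh:13}. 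No gaps.
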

\begin{proof}
The proof of Proposition is omitted as it is a straightforward extension of the proof in~\cite[Proposition 1]{Singh:13}.
\end{proof}
It is easy to see that when all tiers have the same SIR threshold, i.e., $\tau_{k} = \tau \; \; \forall k$, the maximum probability of coverage is achieved with equal tier association, i.e., $A_{k}^{*} = 1/K$, and the coverage probability is given by:
\begin{equation}
\boldsymbol{P}^{*}_{c} = \frac{K}{K + \rho(\tau, \alpha)}.
\end{equation}

Importantly, given the tier association probabilities, the optimum bias factors can be found uniquely by solving a system of linear equations.

Define $\textbf{x}^{*} = [x_{2}, x_{3}, \dots, x_{K}]^\top$ with $x_{k} = B_{k}^{2/\alpha}$, $k = 2, 3, \cdots K$ and $x_{1} = B_{1} = 1$ for the uppermost tier. $[~\cdot~]^\top$ denotes the transpose operation. From~\eqref{layerProb}, the tier association probability for tier $k$ can be rewritten as:
\begin{equation}
\begin{array}{ll}
\label{LinearEq}
A_{k}^{-1} & = \displaystyle \sum_{j = 1}^{K}\widehat{\lambda_{j}}\widehat{P_{j}}^{2/\alpha}B_{j}^{2/\alpha}\cdot B_{k}^{-2/\alpha} \\
& = 1 + \displaystyle \sum_{j = 1, j\neq k}^{K}\widehat{\lambda_{j}}\widehat{P_{j}}^{2/\alpha}B_{j}^{2/\alpha}\cdot B_{k}^{-2/\alpha}. \\
\Rightarrow & \left(1 - A_{k}^{-1}\right)B_{k}^{2/\alpha} + \displaystyle \sum_{j = 1, j\neq k}^{K}\left(\widehat{\lambda_{j}}\widehat{P_{j}}^{2/\alpha}\right)B_{j}^{2/\alpha} = 0.
\end{array}
\end{equation}
Setting $a_{jk} = \widehat{\lambda_{j}}\widehat{P_{j}}^{2/\alpha}$, the last line in~\eqref{LinearEq} can be written as:
\begin{equation}
\left(1 - A_{k}^{-1}\right)x_{k} + \displaystyle \sum_{j = 2, j\neq k}^{K}{a}_{jk}x_{j} = -{a}_{1k} \; \; \; k = 2, 3, \cdots K.
\end{equation}
Given the optimum association probabilities, $\{A_{k}^{*}\}_{k = 1}^{K}$, the optimum vector $\textbf{x}^{*} = [x_{2}, x_{3}, \dots, x_{K}]^\top$ is the unique solution to $\textbf{A}\textbf{x} = \textbf{b}$ where $\textbf{b} = [-{a}_{12},-{a}_{13}, \dots, -{a}_{1K}]^\top$, and $\textbf{A}$ is given by:
\begin{equation}
\textbf{A} = \begin{bmatrix}
	(1 - {A_{2}^{*}}^{-1}) 	& a_{32}		& \dots 	& a_{K2} 		\nonumber\\
	a_{23} 	& (1 - {A_{3}^{*}}^{-1})	& \dots 	& a_{K3} 		\nonumber\\
	\vdots 		& \vdots 			& \vdots 	& \vdots			\nonumber\\
	a_{2K} 	& a_{3K}		& \dots 	& (1 - {A_{K}^{*}}^{-1})	\nonumber\\
	\end{bmatrix}. \nonumber
\end{equation}
Note that this result is not limited to the case of optimum tier association probabilities. Given any set of non-zero tier association probabilities, the corresponding bias factors can be found solving the system of linear equations above. The following theorem states that the matrix above is full-rank and, hence, the relationship between the association probabilities and bias factors is one-to-one.
\begin{thm}
\label{FullRank}
The matrix $\mathbf{A}$ is full-rank.
\end{thm}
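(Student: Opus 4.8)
The plan is to show that $\det\mathbf{A}\neq 0$, and the first move is to expose the algebraic structure hidden in the entries. Setting $c_{j} := \lambda_{j} P_{j}^{2/\alpha}$, which is strictly positive for every tier, the coefficient $a_{jk} = \widehat{\lambda}_{j}\widehat{P}_{j}^{2/\alpha}$ factors cleanly as $a_{jk} = c_{j}/c_{k}$, since $\widehat{\lambda}_{j} = \lambda_{j}/\lambda_{k}$ and $\widehat{P}_{j} = P_{j}/P_{k}$. Thus, in the $(K-1)\times(K-1)$ matrix $\mathbf{A}$ (rows and columns indexed by $2,\dots,K$), the off-diagonal entry of row $k$ in column $j$ is $c_{j}/c_{k}$, while the diagonal entry is $1 - A_{k}^{*-1}$.

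The second step is a rank-preserving row scaling: I multiply row $k$ by $c_{k}\neq 0$ for each $k=2,\dots,K$, which changes neither the rank nor the singularity of the matrix. The rescaled matrix $\mathbf{M}$ has off-diagonal entries $\mathbf{M}_{kj} = c_{j}$ and diagonal entries $\mathbf{M}_{kk} = c_{k}(1 - A_{k}^{*-1}) = c_{k} - d_{k}$, where $d_{k} := c_{k}/A_{k}^{*}$. Writing $\mathbf{c} = [c_{2},\dots,c_{K}]^{\top}$, $\mathbf{1}$ for the all-ones vector, and $\mathbf{D} = \diag(d_{2},\dots,d_{K})$, this is precisely $\mathbf{M} = \mathbf{1}\mathbf{c}^{\top} - \mathbf{D}$, a rank-one update of the nonsingular diagonal matrix $-\mathbf{D}$. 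Hence $\mathbf{A}$ is full-rank if and only if $\det\mathbf{M}\neq 0$.

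The third step evaluates $\det\mathbf{M}$ via the matrix determinant lemma, $\det(-\mathbf{D} + \mathbf{1}\mathbf{c}^{\top}) = \det(-\mathbf{D})\,\big(1 + \mathbf{c}^{\top}(-\mathbf{D})^{-1}\mathbf{1}\big)$. Here $\det(-\mathbf{D}) = (-1)^{K-1}\prod_{k=2}^{K} d_{k}$, and the correction factor collapses neatly: $\mathbf{c}^{\top}(-\mathbf{D})^{-1}\mathbf{1} = -\sum_{k=2}^{K} c_{k}/d_{k} = -\sum_{k=2}^{K} A_{k}^{*}$, so that $1 + \mathbf{c}^{\top}(-\mathbf{D})^{-1}\mathbf{1} = 1 - \sum_{k=2}^{K} A_{k}^{*} = A_{1}^{*}$, using the normalization $\sum_{k=1}^{K} A_{k}^{*} = 1$ inherent in~\eqref{layerProb}. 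Therefore $\det\mathbf{M} = (-1)^{K-1} A_{1}^{*} \prod_{k=2}^{K} d_{k}$, and since $A_{1}^{*} > 0$ and every $d_{k} = c_{k}/A_{k}^{*} > 0$, this determinant is nonzero. Consequently $\mathbf{M}$, and hence $\mathbf{A}$, is full-rank.

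I expect the only real obstacle to be recognizing the two structural facts that make the argument work, rather than any hard computation. The first is the factorization $a_{jk} = c_{j}/c_{k}$, which is what lets the $c_{k}$ be scaled out of each row and reveals the rank-one-plus-diagonal form. The second is that the reference (uppermost) tier, fixed by $x_{1} = B_{1} = 1$, is excluded from the unknowns and enters only through $\mathbf{b}$; this is exactly why the correction factor reduces to $A_{1}^{*}$ rather than to $1 - \sum_{k=1}^{K} A_{k}^{*} = 0$, i.e., the normalization $\sum_{k} A_{k}^{*} = 1$ is what keeps the determinant away from zero. I would double-check the index bookkeeping around this term, since an off-by-one in the range of summation is the most plausible place for the argument to slip. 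Everything after these observations is routine.
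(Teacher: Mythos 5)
Your proof is correct, and it reaches the paper's own determinant formula by a genuinely different route. The paper proceeds by induction on $K$: it writes $\mathbf{A}$ as a $2\times 2$ block matrix, applies a block-determinant identity, and uses the multiplicative relations $a_{ij}=1/a_{ji}$ and $a_{ik}a_{kj}=a_{ij}$ to show that the correction term $\mathbf{U}-\mathbf{V}\mathbf{W}$ collapses to a diagonal matrix, arriving at $\det\mathbf{A}=(-1)^{K-1}\bigl(1-\sum_{i=2}^{K}A_{i}^{*}\bigr)/\prod_{i=2}^{K}A_{i}^{*}$. You instead exploit the same structural fact at its source --- the factorization $a_{jk}=c_{j}/c_{k}$, of which the paper's two relations are consequences --- to scale each row by $c_{k}$ and expose $\mathbf{M}=\mathbf{1}\mathbf{c}^{\top}-\mathbf{D}$ as a rank-one update of a nonsingular diagonal matrix; the matrix determinant lemma then gives the determinant in one step, with no induction. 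Your closed form $\det\mathbf{M}=(-1)^{K-1}A_{1}^{*}\prod_{k=2}^{K}d_{k}$ is consistent with the paper's after dividing out the row scalings, since $\det\mathbf{A}=\det\mathbf{M}/\prod_{k=2}^{K}c_{k}=(-1)^{-1+K}A_{1}^{*}/\prod_{k=2}^{K}A_{k}^{*}$ and $A_{1}^{*}=1-\sum_{i=2}^{K}A_{i}^{*}$. What your approach buys is brevity and a transparent explanation of \emph{why} the determinant is nonzero (it is proportional to $A_{1}^{*}$, the association probability of the reference tier excluded from the unknowns); what the paper's induction buys is that it works directly from the stated entries $a_{jk}$ without needing to name the underlying quantities $c_{j}=\lambda_{j}P_{j}^{2/\alpha}$. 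Your positivity argument at the end ($A_{1}^{*}>0$, $d_{k}>0$) matches the paper's closing observation that the numerator $1-\sum_{i=2}^{K}A_{i}^{*}$ is strictly positive.
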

\begin{proof}
See Appendix \ref{AppendixB}.
\end{proof}

The expressions for the probability of coverage and the tier association provided above do not take into account user mobility, the associated handoffs and the connection failure due to such handoffs. Using the handoff rate derived in Section~\ref{sec:hoffrate} and the linear cost function given in~\eqref{CovCost}, we can generalize the results in Section~\ref{coverage1} to incorporate user mobility in the probability of coverage in a multi-tier network as follows:
\begin{thm}
\label{propos:CovHandoffMultLay}
The probability of coverage $P_{c}(v, \{\lambda_{k}\}_{k = 1}^{K}, \beta, \{\tau_{k}\}_{k = 1}^{K}, \alpha)$ for a typical mobile user moving a distance $v$ in a unit of time in a multi-tier network with the biased average received power as the tier connection metric and the access points of tier $k$ distributed according to a homogeneous PPP with density $\lambda_{k}$ is given by:
\begin{multline}
\label{eq:MultLay}
P_{c}(v, \{\lambda_{k}\}_{k = 1}^{K}, \beta, \{\tau_{k}\}_{k = 1}^{K}, \alpha) =  \\
\sum_{k=1}^{K}\frac{1}{A_{k}^{-1} + \rho(\tau_{k}, \alpha)} \left \{(1 - \beta) + \beta \left[ \frac{1}{\pi} \int_{\theta = 0}^{\pi} \left[ 1 - 2 b''_{k}\sqrt{\pi} e^{{b_{k}^{\second}}^{2}}  Q(\sqrt{2}b''_{k}) \right] \exp \Big(-\lambda_{k})v^2(\pi - \theta) \Big)
\mathrm{d}\theta \right] \right \}
\end{multline}
where $b''_{k} = b''_{k} (v, \lambda_{k}, \theta, \tau_{k}, \alpha, A_{k}) =  \frac{va(\theta)}{2\pi}\sqrt{\frac{\pi\lambda_{k}}{A_{k}^{-1} + \rho(\tau_{k}, \alpha)}}$ and $a(\theta) = 2\cos \theta (\pi - \theta) + \sin \theta$.
\end{thm}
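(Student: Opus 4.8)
The plan is to run the single-tier argument of Theorem~\ref{propos:CovHandoff} once per tier, the only genuinely new ingredient being the connection-distance distribution induced by the biased-power association rule. First I would split the overall coverage probability according to the serving tier. Writing $n$ for the index of the tier the user associates with, the linear cost model~\eqref{CovCost} gives
\begin{equation*}
P_{c} = \sum_{k=1}^{K}\Big[\mathds{P}\big(\gamma_{k}\geq\tau_{k},\widebar{H}_{k},n=k\big) + (1-\beta)\,\mathds{P}\big(\gamma_{k}\geq\tau_{k},H_{k},n=k\big)\Big],
\end{equation*}
and I would then apply the algebra of~\eqref{CovNoHandoff} to each summand, rewriting it as $(1-\beta)\,\mathds{P}(\gamma_{k}\geq\tau_{k},n=k)+\beta\,\mathds{P}(\gamma_{k}\geq\tau_{k},\widebar{H}_{k},n=k)$. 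The handoff event here is the intra-tier handoff of Section~\ref{sec:hoffrate} applied within the serving tier, so its probability is governed by $\lambda_{k}$ alone, and I would reuse the conditional independence of coverage and handoff given $r$ exactly as in the proof of Theorem~\ref{propos:CovHandoff}.

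The hard part will be the joint density that the user both connects to tier $k$ and does so at connection distance $r$; this is the step where the multi-tier structure enters. From the association rule~\eqref{eq:LayerAsso}, tier $k$ is selected precisely when every competing tier $j$ has its nearest AP beyond $r(\widehat{P}_{j}\widehat{B}_{j})^{1/\alpha}$. Multiplying the tier-$k$ nearest-neighbour density $2\pi\lambda_{k}r\,e^{-\pi\lambda_{k}r^{2}}$ by the PPP void probabilities $\prod_{j\neq k}e^{-\pi\lambda_{j}(\widehat{P}_{j}\widehat{B}_{j})^{2/\alpha}r^{2}}$ yields
\begin{equation*}
g_{k}(r) = 2\pi\lambda_{k}r\,\exp\!\left(-\pi r^{2}{\textstyle\sum_{j=1}^{K}}\lambda_{j}(\widehat{P}_{j}\widehat{B}_{j})^{2/\alpha}\right).
\end{equation*}
The definition~\eqref{layerProb} of $A_{k}$, read as the identity $\sum_{j=1}^{K}\lambda_{j}(\widehat{P}_{j}\widehat{B}_{j})^{2/\alpha}=\lambda_{k}A_{k}^{-1}$, collapses the exponent to $g_{k}(r)=2\pi\lambda_{k}r\,e^{-\pi\lambda_{k}A_{k}^{-1}r^{2}}$. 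In other words, association merely rescales the effective density seen in the $r$-integral by the factor $A_{k}^{-1}$.

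With $g_{k}(r)$ established, both pieces integrate as in the single-tier case. Inserting $\mathds{P}(\gamma_{k}\geq\tau_{k}|r)=e^{-\pi\lambda_{k}r^{2}\rho(\tau_{k},\alpha)}$ from~\cite[Theorem 2]{andrews:11}, the first piece is
\begin{equation*}
\mathds{P}(\gamma_{k}\geq\tau_{k},n=k)=\int_{0}^{\infty}2\pi\lambda_{k}r\,e^{-\pi\lambda_{k}(A_{k}^{-1}+\rho_{k})r^{2}}\,\mathrm{d}r=\frac{1}{A_{k}^{-1}+\rho_{k}},
\end{equation*}
which reproduces the stationary multi-tier term of Proposition~\ref{Proposition1}. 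For the second piece I would multiply in the no-handoff probability $\mathds{P}(\widebar{H}_{k}|r,\theta)=\exp(-\lambda_{k}[v^{2}(\pi-\theta)+rva(\theta)])$ from~\eqref{eq:handoffApprox}; the resulting $r$-integral is the verbatim complete-the-square and $Q$-function computation of Corollary~\ref{propos:handoffRateGen} and Theorem~\ref{propos:CovHandoff}, with the sole change that the coefficient $1+\rho_{k}$ appearing there is everywhere replaced by $A_{k}^{-1}+\rho_{k}$. This substitution is exactly what promotes $b'$ to $b''_{k}=\frac{va(\theta)}{2\pi}\sqrt{\pi\lambda_{k}/(A_{k}^{-1}+\rho_{k})}$ and generates the bracketed $\theta$-integral. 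Assembling both pieces, factoring out $1/(A_{k}^{-1}+\rho_{k})$, and summing over $k$ delivers~\eqref{eq:MultLay}; as a consistency check, setting $K=1$ forces $A_{1}=1$ and recovers Theorem~\ref{propos:CovHandoff} exactly. Since every per-tier step is a re-run of the single-tier proof once the density rescaling is identified, the density-rescaling identity is the only real obstacle and everything downstream is mechanical.
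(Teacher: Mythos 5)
Your proposal is correct and follows essentially the same route as the paper's own proof: decompose by serving tier, rewrite each summand via the algebra of~\eqref{CovNoHandoff}, multiply the tier-$k$ nearest-AP density by the association void probabilities $\prod_{j\neq k}e^{-\pi\lambda_{j}(\widehat{P}_{j}\widehat{B}_{j})^{2/\alpha}r^{2}}$ so the exponent collapses to $\pi\lambda_{k}A_{k}^{-1}r^{2}$, and then repeat the complete-the-square computation with $1+\rho_{k}$ replaced by $A_{k}^{-1}+\rho_{k}$. The paper writes the factor $\mathds{P}(n=k\,|\,r)f_{R_{k}}(r)$ inside the integral and cites~\cite{shin:12} rather than packaging it as your density $g_{k}(r)$, but this is the identical calculation.
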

\begin{proof}
Specializing~\eqref{CovNoHandoff} to tier $k$, we have:
\begin{equation}
\begin{array}{l}
\mathds{P}\left(\gamma_{k} \geq \tau_{k}, n = k, \widebar{H}_{k} | r, \theta \right) + (1 - \beta) \mathds{P}\left(\gamma_{k} \geq \tau_{k}, n = k, H_{k} | r, \theta \right) \\
= (1 - \beta)\mathds{P}(\gamma_{k} \geq \tau_{k}, n = k|r) + \beta \mathds{P}(\gamma_{k} \geq \tau_{k}, n = k, \widebar{H}_{k} |r,\theta).
\end{array}
\end{equation}
The probability of the joint event that the user connects to tier $k$, is in coverage and a handoff does not occur is given by:
\begin{equation}
\begin{array}{l}
\label{eq:Term2}
\mathds{P}\left( \gamma_{k} \geq \tau_{k}, n = k, \widebar{H}_{k} \right) = \mathds{E}_{\Theta}\Big[\mathds{E}_{R_{k}}\left[\mathds{P}\left( \gamma_{k} \geq \tau_{k},n = k, \widebar{H}_{k}|r, \theta \right)  \Big] \right] \vspace{0.1in} \\
= \displaystyle \frac{1}{\pi} \int_{\theta= 0}^{\pi}\int_{r=0}^{\infty}\mathds{P}\left( \gamma_{k} \geq \tau_{k}|r \right)\cdot \mathds{P}(n = k|r) \cdot \mathds{P} \left(\widebar{H}_{k}|r, \theta \right)\cdot f_{R_{k}}\left(r\right)  \mathrm{d}r \mathrm{d}\theta \\
 = \displaystyle \frac{1}{\pi} \int_{\theta= 0}^{\pi}\int_{r=0}^{\infty} e^{-\pi \lambda_{k} r^{2}\rho(\tau_{k}, \alpha)} \cdot \left(\prod_{j = 1,j\neq k}^{K}e^{-\pi \lambda_{j}(\widehat{P}_{j}\widehat{B}_{j})^{2/\alpha}r^{2}}\right) \cdot e^{-\lambda_{k}\left(v^{2}(\pi - \theta) + rva(\theta) \right)}\cdot 2\pi \lambda_{k} r  e ^{- \lambda_{k} \pi r^{2}}  \mathrm{d}r \mathrm{d}\theta \\
 \displaystyle  = \frac{1}{\pi} \displaystyle \int_{\theta= 0}^{\pi} e^{-\lambda_{k} v^{2}(\pi - \theta)} \int_{r= 0}^{\infty}e^{-\pi \lambda_{k} r^{2} \Big[  \rho(\tau_{k}, \alpha) + \sum_{j=1}^{K} \widehat{\lambda}_{j}(\widehat{P}_{j}\widehat{B}_{j})^{2/\alpha}\Big]}\cdot  2\pi \lambda_{k} e^{-\lambda_{k}rva(\theta)} \mathrm{d}r \mathrm{d}\theta\\
\displaystyle  = \frac{1}{\pi} \displaystyle \int_{\theta= 0}^{\pi} e^{-\lambda_{k} v^{2}(\pi - \theta)} \int_{r= 0}^{\infty} e^{-\lambda_{k} \pi (A_{k}^{-1} + \rho_{k}) \Big(\left(r + \frac{va(\theta)}{2\pi (A_{k}^{-1} + \rho_{k})} \right)^{2} - (\frac{va(\theta)}{2\pi(A_{k}^{-1} + \rho_{k})})^{2} \Big)}2\pi \lambda_{k} r \mathrm{d}r \mathrm{d}\theta \\
\displaystyle = \frac{1}{\pi}\left(\frac{1}{A_{k}^{-1} + \rho_{k}}\right) \int_{\theta = 0}^{\pi} e^{-\lambda_{k} v^2(\pi - \theta)} \left[ 1 -  2b_{k}'' \sqrt{\pi} e^{{b^{\second}}^2} Q(\sqrt{2}b_{k}'') \right]  \mathrm{d}\theta,
\end{array}
\end{equation}
where $\rho_{k} = \rho(\tau_{k}, \alpha)$. The change of variable $t = \sqrt{2\pi\lambda_{k}(A_{k}^{-1} + \rho_{k})} (r + \frac{va(\theta)}{2\pi(A^{-1}_{k} + \rho_{k})})$, and setting $b_{k}''(v, \lambda_{k}, \theta, \tau_{k}, \alpha, A_{k}) = \frac{va(\theta)}{2\pi} \sqrt{\frac{\pi \lambda_{k}}{A^{-1}_{k} + \rho_{k}}}$ gives the final expression. Using the sum probability of disjoint events, the probability of coverage in a multi-tier network is then given by:
\begin{equation}
\begin{array}{l}
\label{Final}
P_{c}(v, \{\lambda_{k}\}_{k = 1}^{K}, \beta, \{\tau_{k}\}_{k = 1}^{K}, \alpha)  \displaystyle = \sum_{k=1}^{K} \bigg[(1 - \beta)\mathds{P}(\gamma_{k} \geq \tau_{k}, n = k) + \beta \mathds{P}(\gamma_{k} \geq \tau_{k}, n = k, \widebar{H}_{k}) \bigg] \\
 =  \displaystyle \sum_{k = 1}^{K} \bigg[ \left(\frac{1 - \beta}{A_{k}^{-1} + \rho_{k}}\right) + \frac{\beta}{\pi}\left(\frac{1}{A_{k}^{-1} + \rho_{k}}\right) \int_{\theta = 0}^{\pi} e^{-\lambda_{k} v^2(\pi - \theta)} \left[ 1 -  2b_{k}'' \sqrt{\pi} e^{{b_{k}^{\second}}^2} Q(\sqrt{2}b_{k}'') \right]  \mathrm{d}\theta\bigg] \\
 =  \displaystyle \sum_{k=1}^{K}\frac{1}{A_{k}^{-1} + \rho_{k}}
\left \{(1 - \beta) + \beta \left[ \frac{1}{\pi} \int_{\theta = 0}^{\pi} \left[ 1 - 2 b''_{k} \sqrt{\pi} e^{{b_{k}^{\second}}^{2}}  Q(\sqrt{2}b''_{k}) \right] \exp \Big(-\lambda_{k}v^2(\pi - \theta) \Big)
\mathrm{d}\theta \right] \right \},
\end{array}
\end{equation}
and the proof is complete.
\end{proof}
As in the single-tier case, if the user is stationary ($v = 0$), or there is no connection failure due to the handoff ($\beta = 0$), the expression for the overall probability of coverage reduces to the expression in Proposition~\ref{Proposition1}.

To maximize the overall probability of coverage, let the tier association probabilities $\{A_{i} \}_{i=1}^{K}$ become the set of optimization variables. The optimization problem with the objective of maximizing the overall probability of coverage can then be formulated as:
\begin{equation}
\begin{array} {ll}
\label{OptProb}
P_c^* = \displaystyle \max_{\{A_{i}\}_{i =1}^{K}} & \displaystyle \sum_{i=1}^{K}\Big[(1 - \beta)f_{i,1}(A_{i}) + \beta f_{i,2}(A_{i})\Big] \\
\mbox{subject to:}  & \displaystyle \sum_{i = 1}^{K} A_{i} = 1 \\
& A_{i} \geq 0 \; \; i = 1,\cdots K,
\end{array}
\end{equation}
where the objective function in~\eqref{OptProb} is the same expression as in~\eqref{eq:MultLay}. Here, $f_{i,1} (A_{i})= \frac{A_{i}}{1 + A_{i}\rho(\tau_{k}, \alpha)}$ can easily be shown to be concave with respect to $A_{i}$. Due to the complexity of $f_{i,2}(A_{i})$, it is not easy to show this to be concave. However, as shown in Fig.~\ref{concavity}, numerically $\frac{\partial^{2}f_{i,2}(A_{i})}{\partial A_{i}^{2}} < 0$ over a wide range of system parameters; this suggests that the function is concave for the system parameters considered in this paper. Since linear combinations of concave functions (with positive coefficients) is concave, we present the following conjecture:
\begin{conj}
The probability of coverage in a multi-tier network considering handoff derived in Theorem \ref{propos:CovHandoffMultLay} is concave with respect to $\{A_{i}\}_{i=1}^{K}$.
\end{conj}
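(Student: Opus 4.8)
The plan is to exploit the \emph{separable} structure of the objective in~\eqref{OptProb}: since $P_c=\sum_{i=1}^{K}[(1-\beta)f_{i,1}(A_i)+\beta f_{i,2}(A_i)]$ and each summand depends on its own variable $A_i$ alone, joint concavity on the simplex follows once every one-dimensional map $A_i\mapsto f_{i,1}(A_i)$ and $A_i\mapsto f_{i,2}(A_i)$ is concave. The term $f_{i,1}(A_i)=A_i/(1+A_i\rho_i)=(A_i^{-1}+\rho_i)^{-1}$ is concave (second derivative $-2\rho_i/(1+A_i\rho_i)^3<0$), as already noted. So the whole difficulty is the concavity of $f_{i,2}$, and I would treat each tier separately, dropping the index and writing $\rho=\rho_i$, $\lambda=\lambda_i$.

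The first step is to recast $f_{i,2}$ tractably. Using $Q(\sqrt2\,b)=\tfrac12\erfc(b)$ gives $\sqrt\pi\,e^{b^2}Q(\sqrt2 b)=e^{b^2}\int_b^\infty e^{-y^2}\mathrm dy=:M(b)$, so the bracket in~\eqref{eq:MultLay} is $\phi(b):=1-2bM(b)$ evaluated at $b=c(\theta)\sqrt s$, where $s:=(A^{-1}+\rho)^{-1}$ and $c(\theta):=\tfrac{v a(\theta)}{2\pi}\sqrt{\pi\lambda}$. Substituting $y=b+t$ in $M$ and integrating $\tfrac{\mathrm d}{\mathrm dt}e^{-t^2-2bt}$ yields the representation $\phi(b)=2\int_0^\infty t\,e^{-t^2-2bt}\,\mathrm dt$, and after $w=\sqrt s\,t$ the per-angle integrand becomes $p(s;c):=s\,\phi(c\sqrt s)=2\int_0^\infty w\,e^{-w^2/s-2cw}\,\mathrm dw$, so that $f_{i,2}=\psi\circ g$ with $g(A)=(A^{-1}+\rho)^{-1}$ and $\psi(s)=\tfrac1\pi\int_0^\pi p(s;c(\theta))\,e^{-\lambda v^2(\pi-\theta)}\,\mathrm d\theta$.

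The second step is to establish the monotonicity and curvature of $p(s;c)$. Differentiating under the integral sign, $p'(s)=\tfrac2{s^2}\int_0^\infty w^3 e^{-w^2/s-2cw}\mathrm dw>0$ for every $c$, so $p$ (hence $\psi$ and $g$) is increasing. For the curvature, the substitution $x=w/\sqrt s$ gives $p''(s)=\tfrac2s\int_0^\infty x^3(x^2-2)e^{-x^2}e^{-\mu x}\mathrm dx$ with $\mu=2c\sqrt s$; the key observation is the exact identity $x^3(x^2-2)e^{-x^2}=-\tfrac12\tfrac{\mathrm d}{\mathrm dx}\big[x^4e^{-x^2}\big]$, so one integration by parts collapses the sign-indefinite integrand into $\int_0^\infty x^3(x^2-2)e^{-x^2}e^{-\mu x}\mathrm dx=-\tfrac\mu2\int_0^\infty x^4 e^{-x^2-\mu x}\mathrm dx$. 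Thus $p''(s;c)$ has the sign of $-\mu=-\,\mathrm{sign}(c)$, so $p$ is concave whenever $c\ge0$. On the subdomain where every $c(\theta)\ge0$ this finishes the proof, because a concave nondecreasing function composed with the concave increasing $g$ is concave, positive $\theta$-averages preserve concavity, and separable sums of concave functions are concave.

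The main obstacle is precisely that $c(\theta)\propto a(\theta)=2\cos\theta(\pi-\theta)+\sin\theta$ is \emph{not} sign-definite on $(0,\pi)$: one checks $a(\pi)=0$ with $a'(\pi)=1>0$, so $a<0$ on an interval just below $\pi$, where $p(s;c)$ is \emph{convex} ($\mu<0$ in the identity above). Concavity therefore cannot be argued angle-by-angle; one must show the net average $\psi''(s)=\tfrac2{\pi s}\int_0^\pi\big[-\tfrac{\mu(\theta)}2\int_0^\infty x^4 e^{-x^2-\mu(\theta)x}\mathrm dx\big]e^{-\lambda v^2(\pi-\theta)}\mathrm d\theta\le0$, i.e. that the concave contributions from $\{a>0\}$ dominate the convex ones from $\{a<0\}$. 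To leading order in the small-displacement regime in which the whole analysis is derived, $-\tfrac\mu2\int_0^\infty x^4 e^{-x^2-\mu x}\mathrm dx\approx-\tfrac{3\sqrt\pi}{16}\mu$ is linear in $\mu$, so the sign of $\psi''$ is governed by $\int_0^\pi a(\theta)\,e^{-\lambda v^2(\pi-\theta)}\,\mathrm d\theta$, whose unweighted value is $\int_0^\pi a(\theta)\,\mathrm d\theta=6>0$; this secures concavity for slow users, where the weight is nearly uniform. Making the estimate uniform over all $(v,\lambda,\rho)$ is the crux: as $\lambda v^2$ grows the weight $e^{-\lambda v^2(\pi-\theta)}$ concentrates near $\theta=\pi$, exactly where $a\le0$, and one must simultaneously bound this concentration and control the higher-order corrections to the linear approximation of the inner integral. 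This balancing estimate is what I expect to be hard, and is the reason the statement is offered as a conjecture, corroborated by the numerical evidence $\partial^2 f_{i,2}/\partial A_i^2<0$ in Fig.~\ref{concavity}.
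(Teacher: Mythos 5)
The paper offers no proof of this statement at all: it is posed as a conjecture and supported only by the numerical observation that $\partial^{2}f_{i,2}/\partial A_{i}^{2}<0$ over the parameter ranges in Fig.~\ref{concavity}. Your attempt therefore cannot be compared to a proof in the paper; what it can be judged on is whether it closes the gap, and it does not --- but it is an honest and technically correct partial reduction that goes considerably further than the paper does. I checked the individual steps: $\sqrt{\pi}e^{b^{2}}Q(\sqrt{2}b)=e^{b^{2}}\int_{b}^{\infty}e^{-y^{2}}\mathrm{d}y$ is right; the representation $1-2bM(b)=2\int_{0}^{\infty}te^{-t^{2}-2bt}\mathrm{d}t$ follows from integrating $\frac{\mathrm{d}}{\mathrm{d}t}e^{-t^{2}-2bt}$ as you say; the rescaling to $p(s;c)=2\int_{0}^{\infty}we^{-w^{2}/s-2cw}\mathrm{d}w$ is correct; the identity $x^{3}(x^{2}-2)e^{-x^{2}}=-\tfrac{1}{2}\frac{\mathrm{d}}{\mathrm{d}x}[x^{4}e^{-x^{2}}]$ and the resulting sign formula $\mathrm{sign}\,p''(s;c)=-\mathrm{sign}(c)$ check out; $g(A)=A/(1+A\rho)$ is indeed concave increasing and $\psi$ is increasing, so the composition rule applies; and $\int_{0}^{\pi}a(\theta)\mathrm{d}\theta=6$ with $a<0$ only on an interval below $\theta=\pi$ is correct.

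The gap you flag at the end is genuine, and it is the whole conjecture: because $a(\theta)$ changes sign, concavity cannot be decided pointwise in $\theta$, and your own reduction shows why a uniform argument is unlikely to exist. Writing $F(\mu)=-\tfrac{\mu}{2}\int_{0}^{\infty}x^{4}e^{-x^{2}-\mu x}\mathrm{d}x$, one has $F(\mu)\to 0^{-}$ like $\mu^{-4}$ as $\mu\to+\infty$ but $F(\mu)\to+\infty$ as $\mu\to-\infty$; combined with the weight $e^{-\lambda v^{2}(\pi-\theta)}$ concentrating precisely on the region where $a\le 0$, this suggests the sign of $\psi''$ can flip for large $v\sqrt{\lambda_{k}}$ --- i.e., the statement is plausibly true only in the same small-displacement regime ($v\ll R$) under which Theorem~\ref{propos:CovHandoffMultLay} itself was derived, which is consistent with the paper's careful wording ``for the system parameters considered in this paper.'' So your proposal does not prove the conjecture, but it correctly isolates the one inequality that would, shows that inequality holds to leading order for slow users, and identifies why it may fail outside that regime; none of your intermediate claims is wrong, and your self-assessment of where the difficulty lies is accurate.
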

\begin{figure}[h!]
\center
\includegraphics [scale = 0.4]{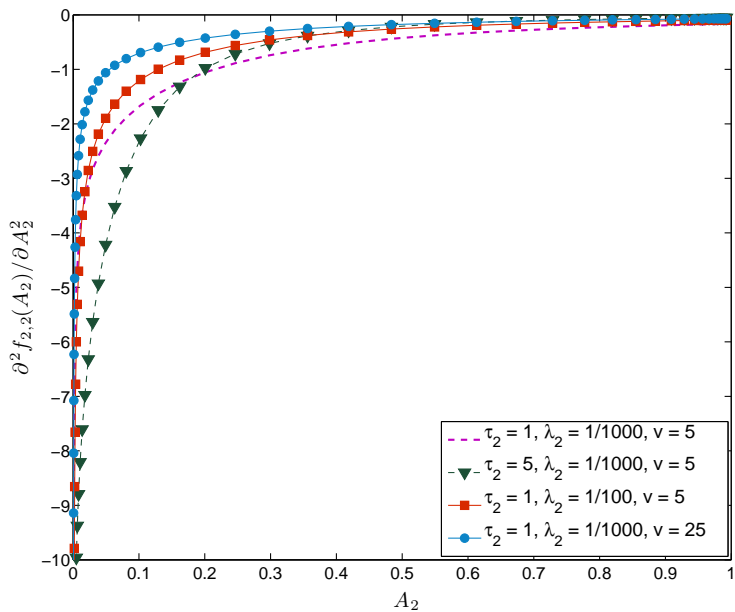}
\caption{The concavity of the term $f_{i,2}(A_{i})$ with respect to $A_{i}$ for the lower tier, i.e., $i = 2$ in a two-tier network .}
\label{concavity}
\end{figure}
The concavity of the objective function, although not leading to a closed-form solution, helps us find the optimum tier association probabilities using standard optimization solvers.

We now show the effect of mobility and more importantly the bias factor on the probability of coverage in a two-tier network through numerical simulations. First, for the purpose of comparison, the probability of coverage for a stationary user is shown in Fig.~\ref{fig:MaxCov}. $\{\lambda_{1}, \lambda_{2}\} = \{0.1, 1\}/(1000m^{2})$ and $\{P_{1}, P_{2}\} = \{46, 20\}$dBm denote the tiers' AP density and transmit power respectively. Tier 1 acts as the reference with bias factor $B_{1} = 1$ and its association probability is given by: $A_{1} = 1 - A_{2}$. As expected, in a two-tier network with equal SIR thresholds, the overall probability of coverage is maximized when the user connects to each tier with equal probability; further, the numerical value of the maximum coverage is independent of the tier AP density or transmit power. This, however, is not the case when mobility and handoff cost is taken into account. In Fig.~\ref{fig:bias}, we consider a two-tier network specified by $\{\lambda_{1}, \lambda_{2}\} = \{0.1, 10\}/(1000 m^{2})$, $\{P_{1}, P_{2}\} = \{46, 20\}$dBm, $\beta = 0.9$ and $\tau_{1} = \tau_{2} = 0$dB, and obtain the optimum tier association, bias factor and the maximum coverage for three different scenarios: 1) ``Optimum Bias" is the solution to~\eqref{OptProb} assuming a concave objective function; 2) ``Optimum Bias at $v = 0$" leads to the optimum tier association for a stationary user regardless of its mobility and handoffs derived in Proposition~\ref{Proposition1}; 3) ``Max-SIR" depicts the scenario where all tiers have the same bias factor, $B_{j} = 1 \; \forall j$, and the user connects to the tier with the maximum average received power. We also compare the obtained results with the optimum solution through a brute force search. As is clear, these results suggest that the conjecture stated above is true for the range of network parameters considered in this paper. The figure illustrates the importance of accounting for handoffs in a multi-tier network. Including the effect of mobility leads to improved coverage by pushing fast-moving users to preferentially pick the higher tiers (tiers with lower densities).

\begin{figure}[h!]
\begin{center}
\includegraphics[width=0.65\textwidth]{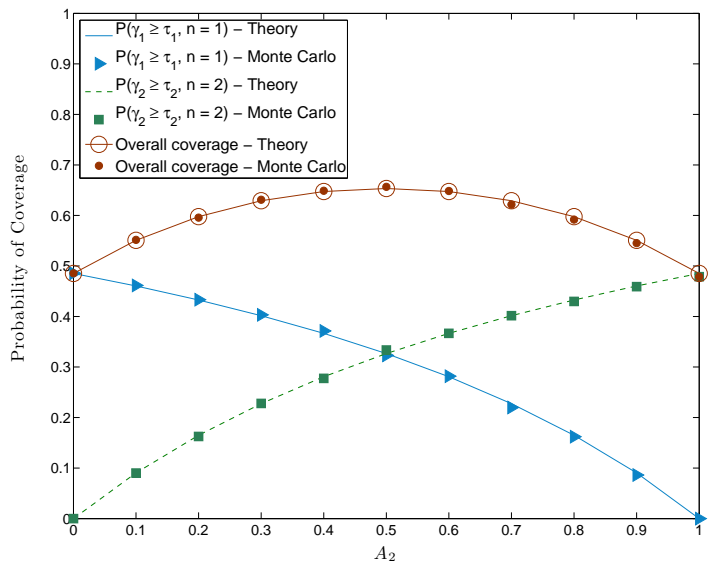}
\caption{Probability of coverage in a two-tier network versus $A_{2}$. $A_{1} = 1 - A_{2},$ $\{\lambda_{1}, \lambda_{2}\} = \{0.1, 1\}/1000$, $\{P_{1}, P_{2}\} = \{46, 20\}$dBm and $\tau_{1} = \tau_{2} = 0$dB. The overall probability of coverage is maximized when $A_{1} = A_{2} = 0.5$.\label{fig:MaxCov}}
\end{center}
\end{figure}

\begin{figure}[tph!]
     \begin{center}
        \subfigure[Overall probability of coverage versus user speed.]{
            \includegraphics[width=0.52\textwidth]{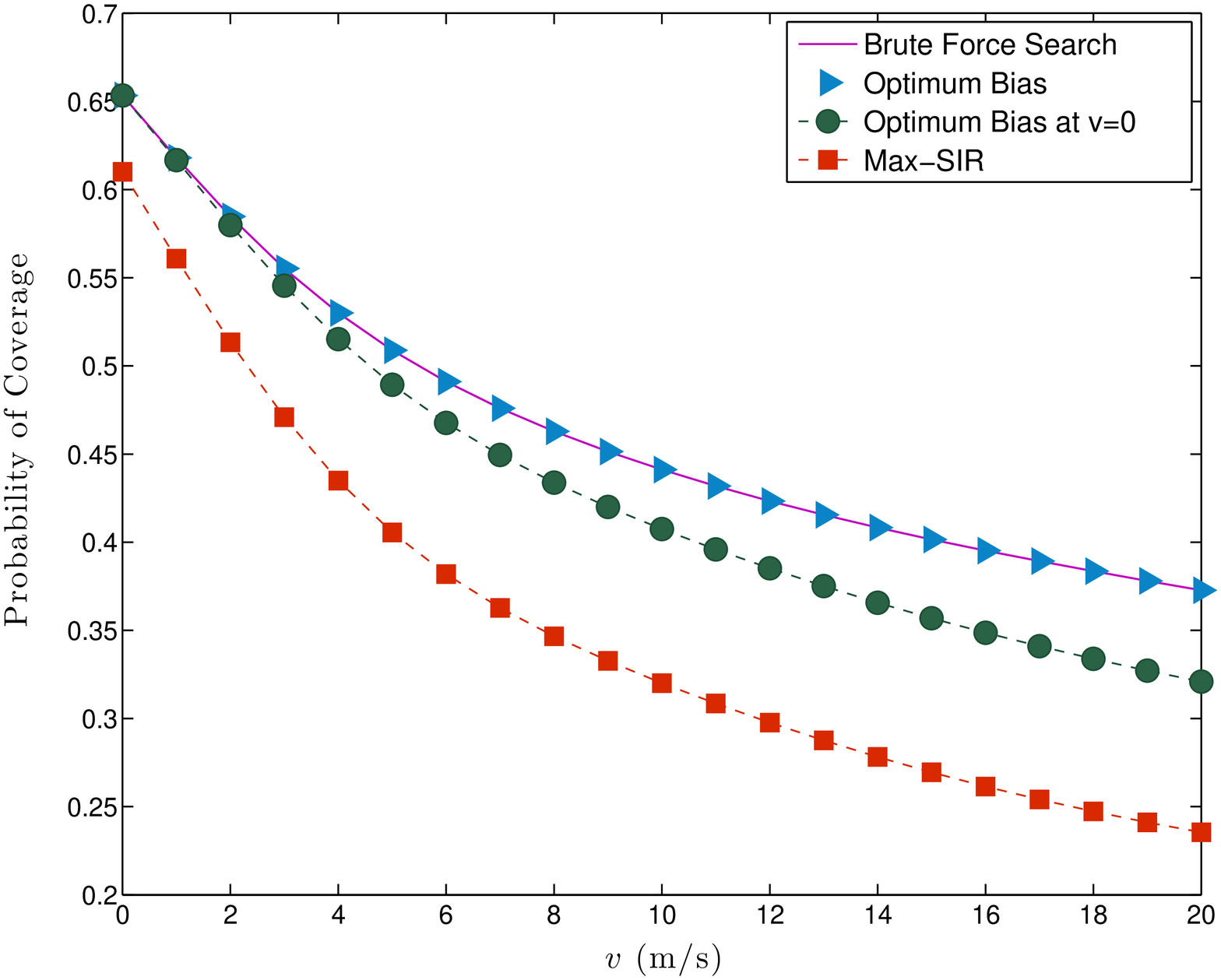}
            \label{Bias1}
        }
        \subfigure[Probability of association to the lower tier versus user speed.]{
           \includegraphics[width=0.52\textwidth]{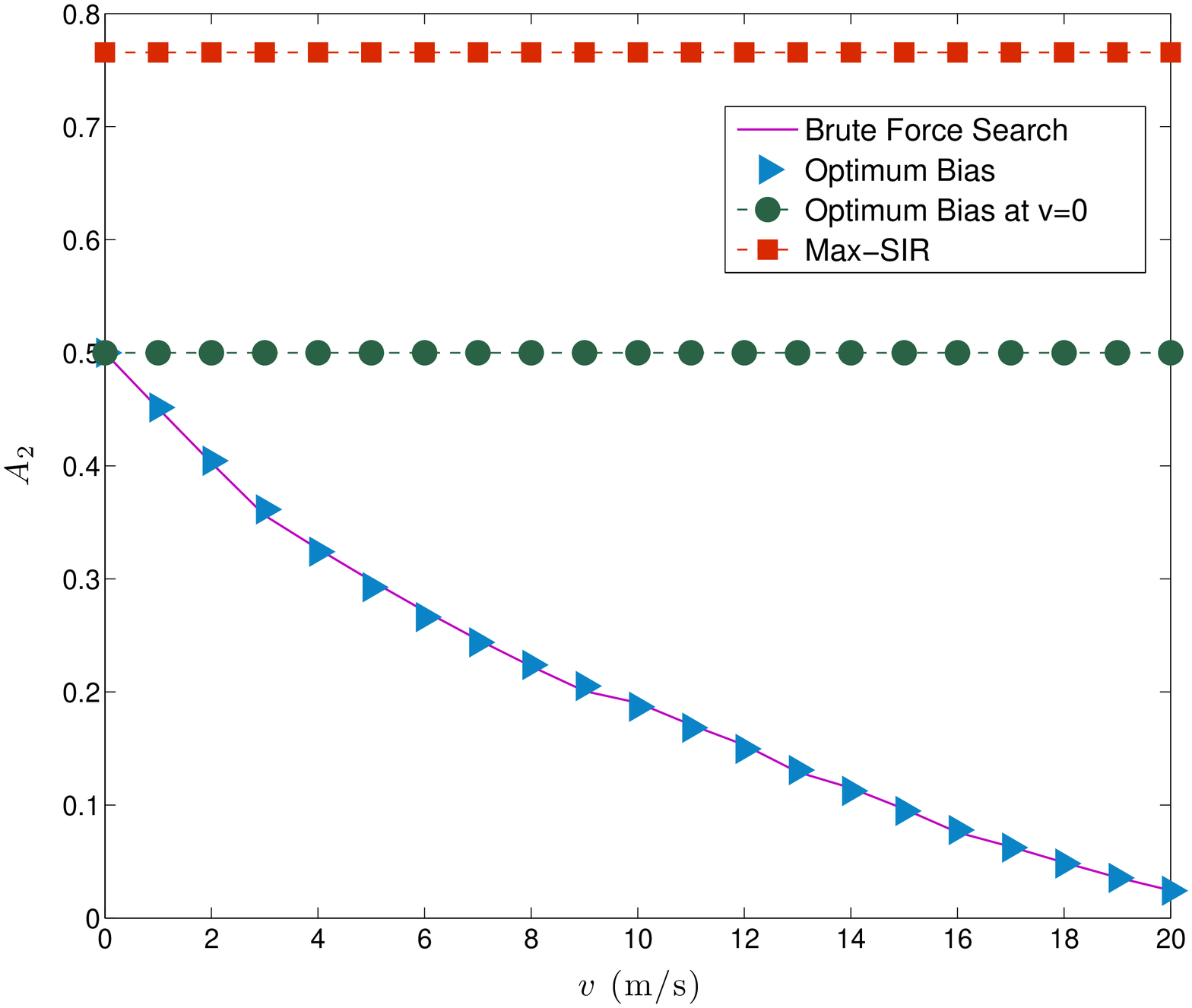}
           \label{Bias2}
         }
          \subfigure[Bias factor for the lower tier. $B_{1} = 1$.]{
           \includegraphics[width=0.52\textwidth]{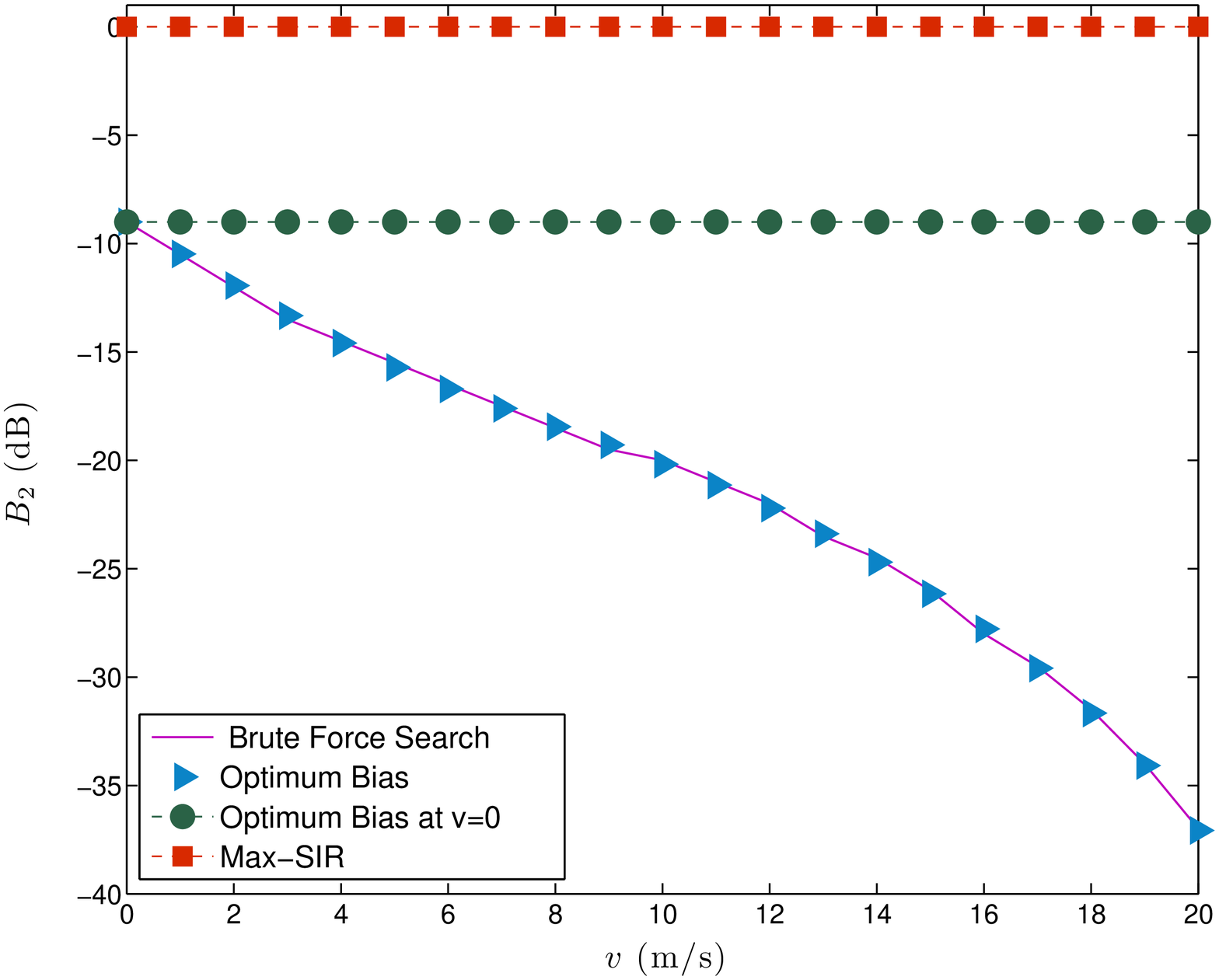}
     	   \label{Bias3}
	 }

        \end{center}
    \caption{
        Coverage in a two-tier network with flexible tier association: (a) overall probability of coverage, (b) probability of association to the lower tier and $\left(c\right)$ the bias factor for the lower tier. $\{\lambda_{1}, \lambda_{2}\} = \{0.1, 10\}/(1000 m^{2})$, $\{P_{1}, P_{2}\} = \{46, 20\}$dBm, $\beta = 0.9$ and $\tau_{1} = \tau_{2} = 0$dB.}
   \label{fig:bias}
\end{figure}


\section{Conclusions}
\label{conclusions}
In this paper, we analyze, and compensate for the impact of user mobility in multi-tier heterogeneous networks. We began by deriving the handoff rate for a typical mobile user in an irregular multi-tier cellular network. The AP locations in each tier are modelled by a homogeneous PPP. We showed that the provided analysis matches the numerical simulations over a broad range of system parameters, i.e., AP density and user speed.

The dependence of the handoff rate on AP density, and the associated cost is the main motivation in assigning users to different tiers of the network based on their velocity. In a multi-tier network where the user connects to an AP of a tier with the maximum biased average received power, the probability of tier association has been shown to be a function of AP transmit power, tier density and bias factor~\cite{shin:12}. While the transmit power and the AP density are mostly determined by the network infrastructure, adjusting the bias factor can change the user association to different tiers in the network. We derived the coverage probability with and without accounting for mobility for the case of orthogonal spectrum allocation among the tiers. We conjecture that optimizing the bias factors is, in fact, a concave problem allowing for efficient solutions. Using the optimal bias factors is shown to improve the coverage probability.

\appendices
\section{Proof of Theorem \ref{thrm:handoffcondition}}
\begin{proof}
\label{AppendixA}
From Fig.~\ref{fig:MobMod1}, for a typical user initially connected to $AP_{s}$ at distance $r$ and moving to the new location $l_{2}$ at distance $R$, a handoff does not occur if there is no other AP closer than $R$ to the user; hence:
\begin{equation} 
\begin{array}{ll}
\label{eq:sets}
1 - \mathds{P}(H_{k} | r, \theta) & = \mathds{P}\left(N\left(\mathcal{|A|}\right) = 1  \; \Big|
 \;  N(|\mathcal{A} \cap\mathcal {C}|) = 1\right)\\
& = \mathds{P}\Big(N(|\mathcal{A} \setminus \mathcal{A} \cap\mathcal {C}|) = 0\Big) \\
& \qeq \exp\Big(-\lambda_{k} (|\mathcal{A} \setminus \mathcal{A} \cap\mathcal {C}|)\Big),
\end{array}
\end{equation}
where $|\cdot|$ denotes the measure of the specified set with $|\varnothing| = 0$, $N(\cdot)$ is the number of APs in the specified area, and $(a)$ results from the null probability of a 2-D Poisson process with density $\lambda_{k}$. The handoff rate depends on the amount of the excess area swiped by the user moving from $l_{1}$ to $l_{2}$ given by:
\begin{equation}
\label{setSubtra}
|\mathcal{A} \setminus \mathcal{A} \cap\mathcal {C}| = |\mathcal{A}| - |\mathcal{A} \cap\mathcal {C}|.
\end{equation}
This measure is the same for the user moving at angle $2\pi - \theta$ with the direction of the connection. Therefore, due to symmetry, we consider $\theta$ being uniformly distributed only in the range of $[0,\pi)$.

In plane geometry, the common area between two intersecting circles with radii $r$ and $R$, where the distance between the centers is $v$, is given by:
\begin{equation} 
\begin{array}{ll}
\label{eq:common}
|\mathcal{A} \cap\mathcal {C}| & = \; \; r ^{2}\cos^{-1}\left ( \frac{r^{2} + v^{2} - R^{2}}{2vr}\right )  + R^{2} \cos^{-1} \left(\frac{R^{2} + v^{2} - r^{2}}{2vR}\right ) \\
& - \; \; \frac{1}{2} \sqrt{(r + R - v)(r + R + v)(v + r - R)(v - r + R)}.\\
\end{array}
\end{equation}
From Fig.~\ref{fig:angels},
\begin{equation}
\label{eq:r&R1}
R^{2} = r^{2} + v^{2} + 2rv\cos \theta,
\end{equation}
and
\begin{equation}
\begin{array}{ll}
\label{eq:r&R2}
r^{2} & =  \; \; R^{2} + v^{2} + 2Rv\cos(\pi - \theta + \phi)\\
& = \; \; R^{2} + v^{2} + 2Rv\cos(\pi - \theta + \sin^{-1}(\frac{v\sin \theta}{R})).
\end{array}
\end{equation}
\begin{figure}
\center
\includegraphics [scale = 0.9]{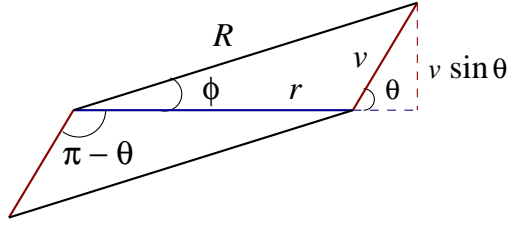}
\caption{Relationship between $r$, $v$ and $R$.}
\label{fig:angels}
\end{figure}
Using~\eqref{eq:r&R1} and~\eqref{eq:r&R2} in~\eqref{eq:common}, the common area is then given by:
\begin{equation}
\begin{array}{ll}
|\mathcal{A} \cap\mathcal {C}| & = \; \; r^{2}\cos^{-1}\left( - \cos \theta\right) + R^{2}\cos^{-1}\left( - \cos(\pi - \theta + \sin^{-1}(\frac{v\sin \theta}{R}) \right)\\
&  - \; \; \frac{1}{2}\sqrt{\left[(r+v) + R\right]\left[(r+v) - R \right]\left[R + (r - v)\right]\left[R - (r - v) \right]},
\end{array}
\end{equation}
where the third term equals:
\begin{equation}
\begin{array}{l}
\displaystyle -\frac{1}{2}\sqrt{\left[(r+v) + R\right]\left[(r+v) - R \right]\left[R + (r - v)\right]\left[R - (r - v) \right]} \\
= \displaystyle -\frac{1}{2} \Big( \left[(r+v)^{2} - R^{2}\right] \left[R^{2} - (r - v)^{2}\right] \Big)^{1/2} \\
= \displaystyle -\frac{1}{2} \Big( 2rv\left[1 - \cos \theta\right] 2rv \left[1 + \cos \theta\right] \Big)^{1/2}  \\
= - rv \sin \theta.
\end{array}
\end{equation}
Using the identity $\cos^{-1}\left(-\cos(\varphi) \right) = \pi - \varphi$, we obtain:
\begin{equation}
\label{finalEq}
|\mathcal{A} \cap\mathcal {C}| = r^{2} (\pi - \theta) + R^{2}\left[\theta - \sin^{-1}\left(\frac{v\sin \theta}{R}\right)\right] - rv\sin \theta.
\end{equation}
Hence, from~\eqref{eq:sets} and~\eqref{setSubtra}, the probability of handoff conditioned on $r$ and $\theta$ is given by:
\begin{equation}
\begin{array}{ll}
\mathds{P}(H_{k} | r, \theta) & = 1 - \exp\left(-\lambda_{k} (|\mathcal{A} \setminus \mathcal{A} \cap\mathcal {C}|)\right) \\
& = 1- \exp \Bigg[ -\lambda_{k} \Bigg(\pi R^{2} - \Big[ r^{2} (\pi - \theta) + R^{2} \left(\theta - \sin^{-1}\left(\frac{v\sin \theta}{R}\right) \right) - rv\sin \theta \Big] \Bigg) \Bigg] \\
& = 1- \exp \Bigg[ -\lambda_{k} \Bigg(R^{2} \Big[ \pi - \theta + \sin^{-1} \left(\frac{v\sin \theta}{R}\right) \Big] - r^{2} (\pi - \theta) + rv\sin \theta \Bigg) \Bigg].
\end{array}
\end{equation}
The handoff rate is then written as:
\begin{multline}
\mathcal{H}_{k}(v, \lambda_{k}) = \mathds {P}(H_{k}) =  \mathds{E}_{\Theta} \Big[ \mathds{E}_{R_{k}} \Big[\mathds{P}(H_{k} | r, \theta) \Big] \Big] \\
= 1 - \frac{1}{\pi} \int_{\theta = 0}^{\pi} \int_{r=0}^{\infty} \exp \Bigg[ -\lambda_{k} \Bigg(R^{2} \left[ \pi - \theta + \sin^{-1}\left(\frac{v\sin \theta}{R}\right) \right] - r^{2}(\pi - \theta) + rv\sin \theta \Bigg) \Bigg] \cdot f_{R_{k}}\left(r\right)  \mathrm{d}r \mathrm{d}\theta \\
= 1 - \frac{1}{\pi} \int_{\theta = 0}^{\pi} \int_{r=0}^{\infty} 2 \pi \lambda_{k} r \exp \Bigg[ -\lambda_{k} \Bigg(R^{2} \left[ \pi - \theta + \sin^{-1}\left(\frac{v\sin \theta}{R}\right) \right] + r^{2}\theta + rv\sin \theta \Bigg) \Bigg]  \mathrm{d}r \mathrm{d}\theta,
\end{multline}
where we used the PDF of $R_{k}$ given by $f_{R_{k}}\left(r\right) = 2 \pi \lambda_{k} r e^{-\pi \lambda_{k}r^{2}}$, and the proof is complete.
\end{proof}

\section{Proof of Theorem \ref{FullRank}}
\label{AppendixB}
\begin{proof}
The proof has two parts. We first show that the determinant of the matrix is given by: $\det$ \textbf{A} = $(-1)^{K-1}\frac{1-\sum_{i=2}^{K}A_{i}^{*}}{\prod_{i = 2}^{K}A_{i}^{*}}$.
The proof is by induction. The statement is true when $K = 2$, since the matrix has only one entry, $1 -  {A_{2}^{*}}^{-1} = \frac{A_{2}^{*} - 1}{A_{2}^{*}}$. When $K > 2$, the coefficient matrix for a $K$-tier network can be written in the form of the block matrix as:
\begin{equation}
\bf{A} = \begin{bmatrix}
 \bf{U} & \bf{V} \\
 \bf{W} & z
\end{bmatrix},
\end{equation}
where $\textbf{U}$ is a square matrix of size $K-2$, $\textbf{V} = [a_{K2}, a_{K3}, \dots, a_{K K-1}]^\top$ is a column vector, $\textbf{W} = [a_{2K}, a_{3K}, \dots, a_{K-1 K}]$ is a row vector and $z =1 - {A_{K}^{*}}^{-1}$ is a scalar. Using determinant of block matrices~\cite{Silvester:00}, we have:
\begin{equation}
\label{detBlock}
\det \textbf{A} = (z - 1) \det \textbf{U} + \det (\textbf{U} - \textbf{V}\textbf{W}), \end{equation} where in the first term $\det$ \textbf{U} = $(-1)^{K-2}\frac{1-\sum_{i=2}^{K-1}A_{i}^{*}}{\prod_{i = 2}^{K-1}A_{i}^{*}}$ by induction. To calculate the second term, note that there is a relation between the off-diagonal entries such that $a_{ij} = 1/a_{ji}$ and $a_{ik}a_{kj} = a_{ij}$ $i \neq j$. Therefore:
\begin{equation}
\textbf{V}\textbf{W} = \begin{bmatrix}
	1  	& a_{32}		& \dots 	& a_{K-1 2} 		\\
	a_{23} 	& 1  		& \dots 	& a_{K-1 3} 		\\
	\vdots 		& \vdots 			& \vdots 	& \vdots			 \\
	a_{2 K-1} 	& a_{3 K-1}		& \dots 	& 1 	\\
	\end{bmatrix}.
\end{equation}
Hence, $\textbf{U} - \textbf{V}\textbf{W}$ is the diagonal matrix $\diag(-{A_{2}^{*}}^{-1}, -{A_{3}^{*}}^{-1}, \cdots, {A_{K-1}^{*}}^{-1})$ with $\det (\textbf{U} - \textbf{V}\textbf{W})$ = $(-1)^{K-2}\prod_{i = 2}^{K-1}{A_{i}^{*}}^{-1}$. Using~\eqref{detBlock} and algebraic manipulation gives the desired result. 

With $\det \textbf{A}$ derived above, the numerator $1-\sum_{i=2}^{K}A_{i}^{*} > 0$, since $\sum_{i = 1}^{K}A_{i}^{*} = 1$, and $A_{i}^{*} \in (0,1)$ for $ i = 1, 2, \dots K$. Hence, \textbf{A} has a non-zero determinant. Therefore, it is a full-rank matrix with rank $K-1$, and the proof is complete.
\end{proof}

\normalem
\clearpage
\bibliographystyle{IEEEtran.bst}
\bibliography{RefV2}
\end{document}